\documentclass[final,onefignum,onetabnum]{siamonline250211}

\ifpdf
  \hypersetup{
    pdftitle={Continuity of the Solution of a Non-Parametric Bayesian Statistical Calibration Procedure},
    pdfauthor={A. Prasadan, D. Estep, and D. Bingham}
  }
\fi

\headers{Continuity of Bayesian Statistical Calibration}{A. Prasadan, D. Estep, and D. Bingham}

\title{Continuity of a Non-Parametric Bayesian Statistical Calibration Procedure%
  \thanks{Submitted on \today.
  \funding{A. Prasadan acknowledges the support of the Natural Sciences and Engineering Research Council of Canada. D. Estep acknowledges the support of the Canada Research Chairs Program and  the Natural Sciences and Engineering Research Council of Canada. D. Bingham acknowledges the support of the Natural Sciences and Engineering Research Council of Canada.}}}

\author{
  Akshay Prasadan%
    \thanks{Department of Statistics and Actuarial Science, 
    Simon Fraser University, Burnaby, BC 
    (\email{akshay\_prasadan@sfu.ca}, \email{destep@sfu.ca}, \email{dbingham@sfu.ca}).}
  \and 
  Donald Estep\footnotemark[2]
  \and 
  Derek Bingham\footnotemark[2]
}

\usepackage{amssymb}
\usepackage{mathtools}
\usepackage{mathrsfs}
\usepackage{bbm}
\usepackage{enumitem}
\usepackage{xcolor}

\newtheorem{assumption}{Assumption}
\newcommand{\cA}{\mathcal{A}}
\newcommand{\cB}{\mathcal{B}}
\newcommand{\cC}{\mathcal{C}}
\newcommand{\cD}{\mathcal{D}}
\newcommand{\cK}{\mathcal{K}}
\newcommand{\cP}{\mathcal{P}}

\newcommand{\RR}{\mathbb{R}}
\newcommand{\NN}{\mathbb{N}}

\newcommand{\TV}{\mathrm{TV}}

\allowdisplaybreaks

\begin{document}
\maketitle

\begin{abstract}
    Recent work has developed a non-parametric Bayesian approach to the calibration of a computer model, which abstractly amounts to the inversion of a pushforward of stochastic input parameters by a smooth map. The framework has been used in several complex scientific applications, motivating our investigation on the continuity of the solution operator with respect to the distribution on the input parameters. We demonstrate that the solution operator for this approach is uniformly continuous in the total variation metric and weakly continuous for a broad class of distributions.
\end{abstract}

\begin{keywords}
Bayesian calibration, computer models, disintegration of measures, non-parametric Bayesian statistics,  stochastic inverse problems
\end{keywords}

% REQUIRED
\begin{MSCcodes}
60A10, 62G05,  62P30
\end{MSCcodes}

\tableofcontents

\section{Introduction}
\label{section:introduction}

Computer models, or simulators, have become indispensable tools for exploring and designing physical systems \cite[e.g.]{grosskopf2020, higdon_2008, kennedy_ohagan}. Frequently,  the simulator response is governed by physical parameters. For instance, in storm surge modeling, parameters could correspond to physical properties of the ocean floor, and specifying their values permits a simulator to generate predictions of flooding behavior. However, these parameters are not always known and must be inferred using field observations. This gives rise to a calibration problem of inferring parameters of the simulator using field observations. There are many methodologies for this problem, however, an important feature of any methodology is stability of the solution with respect to the input, and this is the focus of our paper. 

An important class of problems treats the situation where the parameters of the simulator vary as trials are conducted, contributing to the stochasticity in the field observations. In that case, an inferential object is a distribution on the input parameters. A series of papers have adopted a non-parametric Bayesian approach to this calibration problem, based on the theory of disintegration of measure \cite{SIP_I_2011, SIP_II_2012,sip_III_2014, esip_2025}. The approach has been referred to as the stochastic inverse problem, data-consistent inversion, and the Statistical Calibration Problem (SCP); we adopt the latter name. The SCP updates a prior distribution with field obervations to produce a posterior distribution on the parameters. Applications of this method have included storm surges, vibration-induced material damage,  groundwater contamination, and Covid 19 outbreaks \cite{manning_2015, butler_vibrational_2015, covid_2025, groundwater_2015}. 

A simple example of an SCP is illustrated in \cref{fig:intro_example}, where the plot on the right is the observed output distribution of field observations, while the plot on the left shows the unobserved distribution of the input parameters. We call the distribution of the input parameters the `trial generating distribution' (TGD) since each field observation is generated by a single draw of the parameter, i.e., a single trial. We cannot (in general) directly estimate the TGD on the left plot, because there are many distributions that generate the same field observations. This motivates us to rely on prior information to make a choice of distribution that remains consistent with the field data.  

A more involved example is given in \cite{ip_review_2024}, where the authors examine the temperature profile of a heated material. The associated SCP is to infer the thermal diffusivity parameters that feed into the heat equation and induce the observed temperature at a particular elapsed time; stochasticity in the input arises from using different samples of the material. Importantly, the model here is many-to-one, since multiple combinations of diffusivity parameters map to the same temperature. An observed temperature value then corresponds to a contour of diffusivity parameters in the input space, along which the physical model yields the same output. 

 The  SCP can be viewed as an empirical stochastic inverse problem for a random vector of field data. A solution to the SCP is a distribution on the input space whose induced output distribution (or pushforward measure) matches the distribution of observed field data. That is, stochastic input parameters are passed through a map to generate a random vector, and we aim to invert this process. To solve the problem,  \cite{SIP_I_2011, SIP_II_2012,sip_III_2014, esip_2025} use a numerical approximation of a disintegration of measure. In short, one uses the field data to implicitly construct probability distributions on the contours of a many-to-one map, informed by a prior, and this family of distributions collectively induces an estimate of the input distribution. The proposed estimator is proven to be a valid solution to the SCP because its pushforward  distribution by the map matches the observed distribution.

As mentioned earlier, a desireable property of such a solution is continuity in the underlying input distribution that generated the field data. That is, if the input distribution is perturbed slightly, then the solution should be perturbed only slightly. This of both theoretical and practical importance. First, the input distribution may evolve over time or space in real world experiments, e.g., if samples are taken over a wide time interval or geography. For example, the vegetation of coastal settings may change over time in the coastal surge example. Second, an experiment may be subject to random effects that change in distribution across different iterations. For instance, a lab collecting biological samples may improve its method of sterilizing tools, changing the distribution of inputs into machines that perform various measurements. In particular, the random influence of unwanted specimen may disappear over time, and the SCP solution should adapt continuously to this change.

 Another implication of continuity is convergence of the SCP solution for sequences of TGDs. For example,  in the ground contamination application of \cite{groundwater_2015}, one may be interested in unknown geological parameters that are fixed on any reasonable time scale, as opposed to ones that change with each sample. Treating the unknown input parameter as random with a very tight variance should yield similar results as solving an inverse problem of a deterministic parameter. Thus, the SCP procedure unifies Bayesian calibration in two regimes: when the parameters of interest are random across trials and when the parameters are fixed. Furthermore, continuity is useful for analyzing large numbers of trials of an experiment. If $n$ unobserved samples are drawn from the input distribution to generate the output sample, then consider a smoothed density estimator fitted to the input data. As $n$ grows large, the smoothed density estimator converges to the  input distribution. The SCP estimator should show close agreement between application to a smoothed empirical density as input versus the original input distribution.

In this paper, we prove this continuity property for the estimator proposed by \cite{esip_2025}. We show that their SCP solution is stable to  perturbations in the input distribution, and when a sequence of input distribution converges, the SCP solution also converges. These continuity results differ from those already in \cite{esip_2025}, in which the authors demonstrate that for a fixed input distribution, the solution has a continuous density function over the input space. Stability was also demonstrated in \cite{Butler_2018} for an earlier version of the SCP, in which it is shown that the solution is continuous in the total variation metric. However, our approach will use surface integral representations developed in \cite{esip_2025}, which avoids a direct application of a Bayes' Theorem on measure 0 contours. Moreover, we are able to prove weak convergence results that permit a much broader class of input distributions, violating the assumptions required in \cite{Butler_2018}.

More technically, our contributions are as follows: First, we show the SCP solution operator is a non-expansive, uniformly continuous function of the TGD in the total variation (TV) metric. We use this result to prove pointwise convergence almost everywhere (a.e.) of the TGD density implies TV convergence of the SCP solution.  We then prove that the SCP solution operator is weakly continuous, i.e., we show that weak convergence of the trial-generating probability measure implies weak convergence of the SCP probability measure. Both results are over the space of probability measures on the input space whose pushforward measure has a density with respect to Lebesgue measure. Notably, the weak convergence result extends to the setting where the limiting input distribution is a Dirac measure or a finite mixture of Dirac measures. We then illustrate this convergence behavior with a simulation. These properties are further evidence of the usefulness of the SCP algorithm and its extensions to scientific problems.

\section{Review of the SCP} 
\label{section:background}

We briefly summarize some of the notation used throughout the paper.  For any set $A$, let $\cB_A$ denote the Borel $\sigma$-algebra restricted to $A$. If $\nu_1,\nu_2$ are two measures, we write $\nu_1\ll \nu_2$ if $\nu_1$ is absolutely continuous with respect to $\nu_2$. If $\phi$ is a measurable function and $\nu$ a measure on a measurable space $(\cA,\cB_{\cA})$, we write $\phi\nu$ for the pushforward measure of $\nu$ by $\phi$. The pre-image of $\phi$ over a set $B$ in the co-domain is denoted $\phi^{-1}(B)$. We write  $\mu_A$ to be the Lebesgue measure over a measurable subset $A$ of Euclidean space. The indicator function for a set $A$ is denoted $\chi_A$. 

We introduce definitions and notation similar to those in \cite{esip_2025}. The quantities we introduce are also listed in \cref{table:1} for convenience.   Let $(\Lambda,\cB_{\Lambda})$ be a measurable space with $\Lambda\subset \RR^n$, corresponding to the physical parameter space. Let $Q\colon (\Lambda,\cB_{\Lambda})\to (\cD,\cB_{\cD})$ be a measurable map with $Q(\Lambda)=\cD\subset\RR^m$, representing for example a computer model. Let $P_{\Lambda}^t$ be the unknown trial-generating measure on $\Lambda$ which produces field observations distributed as $P_{\cD}=QP_{\Lambda}^t$. The goal of the SCP: Given $P_{\cD}$, find a probability measure $\hat P_{\Lambda}$ such that $P_{\cD}=Q\hat P_{\Lambda}$. An empirical formulation of the SCP is also of interest: given a sample $\{q_i\}$ drawn from $P_{\cD}$, find $\hat P_{\Lambda}$ such that $\{q_i\}$ is a sample from $Q\hat P_{\Lambda}$. In the example of \cref{fig:intro_example}, the left-plot is the unknown trial-generating distribution $P_{\Lambda}^t$ that induces the observed $P_{\cD}$ on the right-plot. We henceforth assume $Q$ is not injective, corresponding to situations where the experiment provides limited information about the physical system.

\begin{figure}[t]
    \centering
    \includegraphics[width=0.48\textwidth]{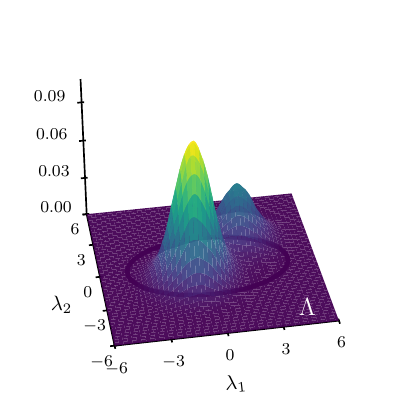}
    \includegraphics[width=0.48\textwidth]{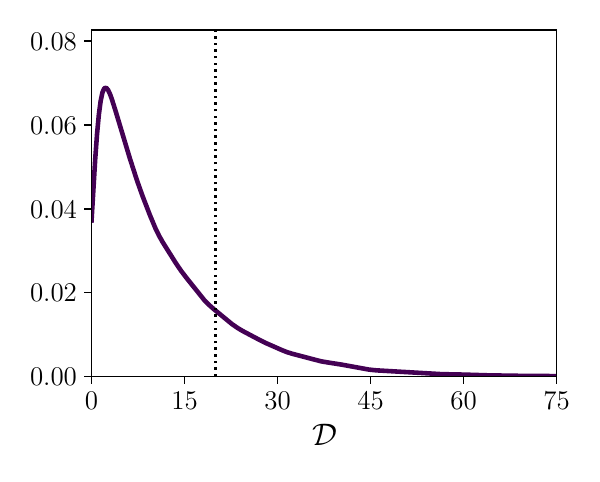}
    \caption{We illustrate an example of  distribution $P_{\Lambda}^t$ on a set $\Lambda\subseteq\RR^2$, and the pushforward distribution $QP_{\Lambda}^t$ by a map $Q\colon \Lambda\to\cD$. We set $Q(\lambda_1,\lambda_2)=\lambda_1^2+3\lambda_2^2$ and $P_{\Lambda}^t$ to be an unequally weighted mixture of two bivariate Gaussians. Left: Kernel density estimate (KDE) fitted to data drawn from $P_{\Lambda}^t$. The solid black line indicates the contour $Q^{-1}(20)$. Right: KDE of the pushforward distribution on $\cD$ generated by applying $Q$ to points $\lambda$ drawn from $P_{\Lambda}^t$. The dotted line corresponds to $Q=20$.}
    \label{fig:intro_example}
\end{figure}

The mathematical tool  for investigating solutions of the SCP in \cite{esip_2025} is  disintegration of measures \cite{ramble,chang_pollard}. Briefly stated, given any solution $\hat P_{\Lambda}$ to the SCP (i.e., such that $Q\hat P_{\Lambda}=P_{\cD}$), there is a family of probability measures $\{P(\,\cdot\,| q)\}_{q\in\cD}$ on $(\Lambda,\cB_{\Lambda})$ that are unique $P_{\cD}$ a.e. that concentrate on the pre-image $Q^{-1}(q)$, in the sense that $P(Q^{-1}(q)| q)=1$ while $P(\Lambda\setminus Q^{-1}(q)| q)=0$. The $P(\,\cdot\,| q)$ are  regular conditional probability measures.   Moreover, we have the following property, referred to as a disintegration of $\hat P_{\Lambda}$ by $Q$: \begin{equation} \label{eq:distintegration}
    \hat P_{\Lambda}(A)=\int_{Q(A)}\int_{Q^{-1}(q)\cap A} dP(\lambda| q) dP_{\cD}(q), \quad A\in\cB_{\Lambda}.
\end{equation} Conversely, given a family of measures $\{P(\,\cdot\,| q)\}_{q\in\cD}$ that concentrate on the contour $Q^{-1}(q)$, one obtains a unique $\hat P_{\Lambda}$ with the aforementioned disintegration property.  For instance, in the left plot of \cref{fig:intro_example}, we depict the contour $Q^{-1}(20)$ as the elliptical black curve; the associated regular conditional probability measure $P(\,\cdot\,| 20)$ is a probability distribution on that ellipse, which notably has $\mu_{\Lambda}$-measure 0. 

The property \cref{eq:distintegration} suggests a means of deriving a solution $\hat P_{\Lambda}$ by starting with information on the right-hand side, especially since $P_{\cD}$ is observable or can be estimated.  However, the family of measures $\{P(\,\cdot\,| q)\}_{q\in\cD}$ that corresponds to the  trial-generating distribution $P_{\Lambda}^t$ is  inherently unknowable because $P_{\Lambda}^t$ is unknown, so we make a subjective choice via an Ansatz. An Ansatz is an assumption of a family of regular conditional probabilities on the contours of $Q$, which can be chosen, for example, by disintegrating any suitable measure on $\Lambda$ by $Q$. In \cite{esip_2025}, the authors begin with a prior $P_p$ measure on $(\Lambda,\cB_{\Lambda})$ and disintegrate to obtain  an Ansatz family $\{P_{p}(\,\cdot\,| q)\}_{q\in\cD}$. For instance, the authors suggest disintegrating the uniform prior by appealing to information theoretic arguments. 

With any prior and its implied Ansatz, one obtains a unique solution to the SCP (formally speaking, the regular conditional measure $P_{p}(\,\cdot\,| q)$ is substituted into \cref{eq:distintegration}, but in practice this is unnecessary). In particular, under quite general assumptions, given a prior $P_p$ on $(\Lambda,\cB_{\Lambda})$ and pushforward $P_{\cD}$ on $(\cD,\cB_{\cD})$, a unique solution $\hat P_{\Lambda}$ to the SCP exists, and if the prior and pushforward both have densities with respect to the respective Lebesgue measure, then so does $\hat P_{\Lambda}$. In general, however,  $\hat P_{\Lambda}$ does not match the unknown trial-generating distribution $P_{\Lambda}^t$, since we do not have access to the disintegration of $P_{\Lambda}^{t}$ by $Q$.  

To estimate the SCP solution, an importance sampling algorithm can be used, as in \cite{esip_2025}, based on the approximation \[\hat P_{\Lambda}(A)\approx \sum_i P_p(A | Q^{-1}(D_i))\cdot P_{\cD}(D_i),\] where the sets $D_i$ form a partition of $\cD$. The $P_{\cD}(D_i)$ factor can be estimated using a histogram estimate of the observed  samples. For $P_p(A | Q^{-1}(D_i))$, one generates $\lambda_1,\dots,\lambda_J\sim P_p$, isolates the subset with $Q(\lambda_j)\in D_i$, and computes the proportion of that subset in $A$. An explicit formula is given in \cite[Equations (4.3), (4.4), and (4.5)]{esip_2025}. We discuss the example in more detail later, but we illustrate three instances of SCP solutions in \cref{fig:final_example}, where in the top row and bottom left plots we display the solution as a heatmap. We overlay the TGD in orange over these three plots. The bottom-right plot shows an example prior $P_p$.

The numerical convergence relies on several regularity conditions, which are also essential to work with concrete representations of the disintegration in terms of densities. Some of these assumptions are a consequence of relying on a non-parametric approach. Thus, we restate the assumptions given in \cite{esip_2025}.

\begin{assumption} \label{assumption:compact}
$\Lambda$ is compact, has measurable boundary $\partial\Lambda$ of null measure, and the closure of $\mathrm{int}(\Lambda)$ is $\Lambda$.
\end{assumption}

\begin{assumption} \label{assumption:Q:smooth}
$Q$ is continuously differentiable a.e. on an open set $U_{\Lambda}$ containing $\Lambda$.
\end{assumption}

\begin{assumption} \label{assumption:Q:rank}
The Jacobian $J_Q$ of $Q$ is of full rank $m$ except on finitely many manifolds of dimension $\le n-1$.
\end{assumption}

\begin{assumption} \label{assumption:prior}
The prior measure $P_p$ has a density $\rho_p$ with respect to $\mu_{\Lambda}$, and for $\mu_{\cD}$-almost all $q\in\cD$, if $\mu_N(\,\cdot\,|  q)$ is the regular conditional measure obtained by disintegrating $\mu_{\Lambda}$, then $\mu_N(\rho_p=0| q)\ne 1$.
\end{assumption}

\begin{assumption} \label{assumption:pushforward}
The TGD $P_{\Lambda}^{t}$ has a pushforward measure $P_{\cD}=QP_{\Lambda}^t$ with density $\rho_{\cD}$ with respect to $\mu_{\cD}$.
\end{assumption}

We require Assumptions \ref{assumption:compact}, \ref{assumption:Q:smooth}, \ref{assumption:Q:rank}, and \ref{assumption:prior} in every result of this paper, as these are core to the SCP algorithm. Assumption \ref{assumption:pushforward} directly pertains to the TGD, unlike Assumptions \ref{assumption:compact}-\ref{assumption:prior}, and it is required to apply the SCP operator to an input distribution. However, later in the paper, we consider sequences of TGDs and their limiting distributions. As a consequence, there is some flexibility in imposing Assumption \ref{assumption:pushforward}, as our techniques may not require running the SCP algorithm on the limiting distribution itself. No such flexibility is permitted with Assumptions \ref{assumption:compact}-\ref{assumption:prior} since we hold fixed $\Lambda,Q, P_p$ in any sequence of TGDs.

To explain Assumption \ref{assumption:pushforward}, we introduce the following notation. Let $\mathscr{P}_{\Lambda,Q}$ be the set of probability measures on $\Lambda$ whose pushforward by $Q$ is absolutely continuous with respect to $\mu_{\cD}$. Then Assumption \ref{assumption:pushforward} is equivalent to requiring $P_{\Lambda}^t\in\mathscr{P}_{\Lambda,Q}$. We are always explicit about using Assumption \ref{assumption:pushforward}  using this notation, while we may omit repetition of Assumptions \ref{assumption:compact}-\ref{assumption:prior}.

A simple sufficient condition for Assumption \ref{assumption:pushforward} is that $P_{\Lambda}^t\ll \mu_{\Lambda}$, i.e., that there is a density for the TGD. To see this, note that this condition implies $QP_{\Lambda}^t\ll Q\mu_{\Lambda}$, and it is known from \cite[Theorem E.2]{esip_2025} that $Q\mu_{\Lambda}\ll \mu_{\cD}$, with the density being a surface integral $\tilde\rho_{\cD}$, defined in \cref{table:1}. Hence $QP_{\Lambda}^t \ll \mu_{\cD}$, i.e., $P_{\Lambda}^t\in\mathscr{P}_{\Lambda,Q}$ as desired.

\begin{table}
\centering
\renewcommand{\arraystretch}{1.3}
\begin{tabular}{ll}
\textbf{Symbol} & \textbf{Meaning or Use} \\
\hline
$\Lambda$, $\cD$ & Input space in $\RR^n$ and output space $Q(\Lambda)$ in $\RR^m$ \\
$\mu_{\Lambda}$, $\mu_{\cD}$ & Lebesgue measures on $\Lambda$ and $\cD$, respectively \\
$\mu_{N}(\,\cdot\, | q)$ & Regular conditional measure of $\mu_{\Lambda}$ given $Q=q$ \\
$P_{\Lambda}^{t}$ or $P_{\Lambda}^{t,i}$; $\hat p_{\Lambda}^t$ or $\hat p_{\Lambda}^{t,i}$ & TGD on $\Lambda$ and density in $\mu_{\Lambda}$ \\
$\hat{P}_{\Lambda}$ or  $\hat{P}_{\Lambda}^i$; $\hat p_{\Lambda}$ or $\hat p_{\Lambda}^i$ & SCP solution and density in $\mu_{\Lambda}$ \\
$\mathscr{P}_{\Lambda,Q}$ & Set of TGDs $P_{\Lambda}^{t}$ with a pushforward $QP_{\Lambda}^t\ll \mu_{\cD}$\\
$\overline{\mathscr{P}_{\Lambda,Q}}$ & Sequential weak closure of $\mathscr{P}_{\Lambda,Q}$  \\
$P_p$; $\rho_p$ & Prior on $\Lambda$ and its density with respect to $\mu_{\Lambda}$ \\
$P_p(\,\cdot\, | q)$ & Regular conditional probability for $P_p$ given $Q=q$ \\
$\rho_{\cD}(q)$ or $\rho_{\cD,i}(q)$ & Pushforward density of TGD: $\frac{d(Q P_{\Lambda}^t)}{d\mu_{\cD}}$ or $\frac{d(Q P_{\Lambda}^{t,i})}{d\mu_{\cD}}$ \\
$\tilde\rho_{\cD}(q)$ & Scaling Factor: $\frac{d(Q\mu_{\Lambda})}{d\mu_{\cD}} = \int_{Q^{-1}(q)} \frac{1}{\sqrt{\det J_Q J_Q^T}} \, ds$ \\
$\tilde \rho_{p,\cD}(q)$ & Scaling factor: $\frac{d(Q P_p)}{d\mu_{\cD}} = \int_{Q^{-1}(q)} \frac{\rho_p}{\sqrt{\det J_Q J_Q^T}} \, ds$ \\
\end{tabular}
\caption{Summary of common notation. TGD refers to the trial-generating distribution on $\Lambda$. Indexing by $i$ indicates a sequence of distributions or densities.}
\label{table:1}
\end{table}

\section{Total Variation Continuity} \label{section:total:variation}

We prove a continuity theorem in the total variation metric on $\Lambda$. A similar stability  result was already developed in \cite[Theorem 4.1]{Butler_2018}, as previously mentioned. Our approach uses surface integral representations of regular conditional probability measures from \cite{esip_2025}. In this section, we require Assumptions  \ref{assumption:compact}, \ref{assumption:Q:smooth}, \ref{assumption:Q:rank},  \ref{assumption:prior}, and any TGD, limiting or not,  must satisfy Assumption \ref{assumption:pushforward}. 

We first establish some important definitions. If $P_1$ and $P_2$ are two measures on $\Lambda$, the total variation distance $d_{\TV}(P_1,P_2)$ between $P_1$ and $P_2$ is defined as $\sup_{E\in\cB_{\Lambda}}|P_1(E)-P_2(E)|$. If $P_1$ and $P_2$ have densities $\rho_1,\rho_2$ with respect to a measure $\mu$, then \[d_{\TV}(P_1,P_2)=d_{\TV}(\rho_1,\rho_2):= \frac{1}{2}\int |\rho_1(x)-\rho_2(x)|d\mu(x).\] Now let  $P_{\cD}$ be the pushforward by $Q$ of the unknown TGD $P_{\Lambda}^t\in\mathscr{P}_{\Lambda,Q}$, and let $P_p$ be a prior distribution on $\Lambda$ with density $\rho_p$ with respect to $\mu_{\Lambda}$. Let  $\rho_{\cD} := \frac{d P_{\cD}}{d\mu_{\cD}}= \frac{d QP_{\Lambda}^t}{d\mu_{\cD}}$, which exists by definition of $\mathscr{P}_{\Lambda,Q}$, and $    \tilde\rho_{p,\cD}(q) := \frac{d QP_p}{d\mu_{\cD}}(q).$ It is shown in \cite{esip_2025} that  \[\tilde\rho_{p,\cD}(q)=\int_{Q^{-1}(q)} \frac{\rho_p}{\sqrt{\det(J_Q J_Q^{T})}} ds,\]  which is a surface integral over $Q^{-1}(q)$ using some piece-wise smooth parametrization. By \cite[Theorem 3.5]{esip_2025}, there is a solution to the SCP $\hat P_{\Lambda}$ with density $\hat \rho_{\Lambda}$ with respect to $\mu_{\Lambda}$ such that for $\mu_{\Lambda}$ almost all $\lambda$, \begin{align} \label{eSIP:density:solution}
    \hat\rho_{\Lambda}(\lambda) =  \frac{\rho_{\cD}(Q(\lambda)) \rho_p(\lambda)}{\tilde\rho_{p,\cD}(Q(\lambda))}.
\end{align} 

Our first result, \cref{theorem:stability}, proven in the appendix, establishes that the SCP solution is uniformly continuous and non-expansive on the trial-generating space $\mathscr{P}_{\Lambda,Q}$ with respect to the TV metric. In practical terms, the theorem ensures that the SCP solution changes in a controlled manner when the input distribution is perturbed, say due to slight shifts in experimental conditions. An analogous result appears in \cite[Theorem 4.1]{Butler_2018}.

\begin{theorem}[TV Stability] \label{theorem:stability} Let $P_{\Lambda}^t$ and $R_{\Lambda}^t$ be two measures in $\mathscr{P}_{\Lambda,Q}$ with pushforward densities $p_{\cD}=\frac{dQP_{\Lambda}^t}{d\mu_{\cD}}$ and $r_{\cD}=\frac{dQR_{\Lambda}^t}{d\mu_{\cD}}$. Let $P_p$ be a prior distribution on $\Lambda$ with density $\rho_p = \frac{dP_p}{d\mu_{\Lambda}}$, and let $\hat p_{\Lambda}$ and $\hat r_{\Lambda}$ be the densities with respect to  $\mu_{\Lambda}$ of the SCP solutions for $P_{\Lambda}^t$ and $R_{\Lambda}^t$, respectively. Then \[d_{\TV}(\hat p_{\Lambda},\hat r_{\Lambda})  = d_{\TV}\left( p_{\cD}, r_{\cD}\right) \le  d_{\TV}\left( P_{\Lambda}^t,R_{\Lambda}^t\right).\]
\end{theorem}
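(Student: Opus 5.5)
The plan is to write both SCP densities explicitly through \cref{eSIP:density:solution} and then convert the $\Lambda$-integral of their difference into a $\cD$-integral. By \cite[Theorem 3.5]{esip_2025}, for $\mu_{\Lambda}$-a.e.\ $\lambda$ we have
\[
\hat p_{\Lambda}(\lambda)-\hat r_{\Lambda}(\lambda)=\frac{\bigl(p_{\cD}(Q(\lambda))-r_{\cD}(Q(\lambda))\bigr)\rho_p(\lambda)}{\tilde\rho_{p,\cD}(Q(\lambda))},
\]
and since $\rho_p\ge 0$ and $\tilde\rho_{p,\cD}\ge 0$, this gives $|\hat p_{\Lambda}-\hat r_{\Lambda}|(\lambda)=\rho_p(\lambda)\,g(Q(\lambda))$ with $g(q):=|p_{\cD}(q)-r_{\cD}(q)|/\tilde\rho_{p,\cD}(q)$. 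Thus
\[
2\,d_{\TV}(\hat p_{\Lambda},\hat r_{\Lambda})=\int_{\Lambda} g(Q(\lambda))\,dP_p(\lambda).
\]
By the change-of-variables formula for pushforwards, this equals $\int_{\cD} g\,d(QP_p)=\int_{\cD} g(q)\,\tilde\rho_{p,\cD}(q)\,d\mu_{\cD}(q)$. Wherever $\tilde\rho_{p,\cD}(q)>0$ the factors cancel and leave $|p_{\cD}-r_{\cD}|$, so the integral becomes $2\,d_{\TV}(p_{\cD},r_{\cD})$. This proves the equality in the statement.

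The main obstacle is the set $Z=\{q:\tilde\rho_{p,\cD}(q)=0\}$. On $Z$ the formula has a vanishing denominator, and we need $\int_Z|p_{\cD}-r_{\cD}|\,d\mu_{\cD}=0$. I would handle this with Assumption \ref{assumption:prior}. It gives $\mu_N(\rho_p=0\mid q)\ne 1$ for $\mu_{\cD}$-a.e.\ $q$. Since $\tilde\rho_{p,\cD}(q)$ is the integral of $\rho_p$ against the disintegrated Lebesgue measure on $Q^{-1}(q)$, namely $\tilde\rho_{p,\cD}(q)=\tilde\rho_{\cD}(q)\int\rho_p\,d\mu_N(\cdot\mid q)$, this forces $\tilde\rho_{p,\cD}>0$ $\mu_{\cD}$-a.e.\ on $\{\tilde\rho_{\cD}>0\}$. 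On the complement $\{\tilde\rho_{\cD}=0\}$ the set has $Q\mu_{\Lambda}$-measure zero. However, $p_{\cD}$ and $r_{\cD}$ are only assumed absolutely continuous with respect to $\mu_{\cD}$, so I would need $\mu_{\cD}(\{\tilde\rho_{\cD}=0\}\cap\cD)=0$. This should follow from Assumptions \ref{assumption:Q:smooth}--\ref{assumption:Q:rank}: $Q$ is a submersion off a null set, so every point of $\cD=Q(\Lambda)$ outside the $\mu_{\cD}$-null image of the critical manifolds has a preimage contour of positive surface measure. If that step proves delicate, I would adopt the convention used in \cite{esip_2025}, namely that the solution formula is understood with $0/0=0$ and that $p_{\cD}$ and $r_{\cD}$ vanish a.e.\ where $\tilde\rho_{p,\cD}$ does. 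This is implicit in the statement that the formula defines a probability density.

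For the inequality, I would use the fact that pushforward is a contraction in total variation. For any $E\in\cB_{\cD}$, measurability of $Q$ gives
\[
|QP_{\Lambda}^t(E)-QR_{\Lambda}^t(E)|=|P_{\Lambda}^t(Q^{-1}(E))-R_{\Lambda}^t(Q^{-1}(E))|\le d_{\TV}(P_{\Lambda}^t,R_{\Lambda}^t).
\]
Taking the supremum over $E$, and using that for measures with densities the supremum definition of $d_{\TV}$ agrees with half the $L^1$ distance (attained at $E=\{p_{\cD}>r_{\cD}\}$), yields $d_{\TV}(p_{\cD},r_{\cD})\le d_{\TV}(P_{\Lambda}^t,R_{\Lambda}^t)$, which completes the proof.
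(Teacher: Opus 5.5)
Your proposal is correct and follows essentially the same chain as the paper's proof: substitute \cref{eSIP:density:solution}, change measure to $P_p$, pass to $\cD$, cancel $\tilde\rho_{p,\cD}$, then apply the data-processing inequality (which you prove directly instead of citing), and your pushforward change-of-variables step is exactly what the paper's disintegration of $P_p$ reduces to once the inner conditional integral equals $1$. Your explicit handling of $\{\tilde\rho_{p,\cD}=0\}$ covers a point the paper's proof leaves implicit; it is settled either by the fact, used elsewhere in the paper, that $\mu_{\cD}(\{\tilde\rho_{p,\cD}=0\})=0$ from the proof of \cite[Theorem 3.5]{esip_2025}, or by your observation that $\int_{\Lambda}\hat p_{\Lambda}\,d\mu_{\Lambda}=1$ forces $p_{\cD}$ (and likewise $r_{\cD}$) to vanish $\mu_{\cD}$-a.e.\ on that set.
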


The next result is one of several in which we consider a sequence of unknown TGDs $\{P_{\Lambda}^{t,i}\}_{i\in\NN}$ that converge in a specified metric. Our claim is that the  sequence of corresponding SCP solution measures $\hat P_{\Lambda}^i$ also converges in a specified metric. The following local limit theorem proves that pointwise convergence almost everywhere of a sequence of trial-generating densities implies TV-convergence of the SCP solutions. Note that this theorem requires Assumption \ref{assumption:pushforward}  for all $i\in\NN$ and the limiting TGD.

\begin{theorem}[Local Limit Theorem] \label{theorem:local:limit}
    Let $\{P_{\Lambda}^{t,i}\}_{i\in\NN}$ be a sequence of TGDs in $\mathscr{P}_{\Lambda,Q}$ with densities $p_{\Lambda}^{t,i}$ with respect to $\mu_{\Lambda}$. Let $P_{\Lambda}^{t,\ast}$ be a TGD in $\mathscr{P}_{\Lambda,Q}$ with density $p_{\Lambda}^{t,\ast}$ with respect to $\mu_{\Lambda}$. Let $P_p$ be any prior on $\Lambda$ with density $\rho_p$ with respect to $\mu_{\Lambda}$. Let $\{\hat P_{\Lambda}^i\}$ be the corresponding sequence of SCP solutions to the TGD sequence, with densities $\hat p_{\Lambda}^i$ with respect to $\mu_{\Lambda}$. Let $\hat P_{\Lambda}^{\ast}$ be the solution of the SCP for $P_{\Lambda}^{t,\ast}$, with density $\hat p_{\Lambda}^{\ast}$ in $\mu_{\Lambda}$. Suppose $p_{\Lambda}^{t,i}(\lambda)\to p_{\Lambda}^{t,\ast}(\lambda)$ for $\mu_{\Lambda}$-a.e. $\lambda$.  Then $d_{\TV}(\hat p_{\Lambda}^{i},\hat p_{\Lambda}^{\ast})\to 0$.
\end{theorem}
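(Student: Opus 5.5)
The plan is to reduce the statement to \cref{theorem:stability} and then to Scheffé's lemma. By \cref{theorem:stability}, applied with $P_{\Lambda}^t = P_{\Lambda}^{t,i}$ and $R_{\Lambda}^t = P_{\Lambda}^{t,\ast}$ and the fixed prior $P_p$, we get
\[
d_{\TV}(\hat p_{\Lambda}^{i},\hat p_{\Lambda}^{\ast}) \le d_{\TV}\bigl(P_{\Lambda}^{t,i},P_{\Lambda}^{t,\ast}\bigr) = \frac{1}{2}\int_{\Lambda} \bigl|p_{\Lambda}^{t,i}-p_{\Lambda}^{t,\ast}\bigr|\,d\mu_{\Lambda}.
\]
So it suffices to show that the right-hand side tends to zero. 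The hypotheses of \cref{theorem:stability} hold because every measure in question lies in $\mathscr{P}_{\Lambda,Q}$, and Assumptions \ref{assumption:compact}--\ref{assumption:prior} are in force.

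The key step is Scheffé's lemma. The functions $p_{\Lambda}^{t,i}$ are nonnegative, each integrates to $1$ against $\mu_{\Lambda}$, and they converge $\mu_{\Lambda}$-a.e. to $p_{\Lambda}^{t,\ast}$, which also integrates to $1$. I would give the short proof directly rather than cite it. Set $g_i := (p_{\Lambda}^{t,\ast}-p_{\Lambda}^{t,i})^{+}$. Then $0\le g_i\le p_{\Lambda}^{t,\ast}$, which is integrable, and $g_i\to 0$ a.e. By dominated convergence, $\int g_i\,d\mu_{\Lambda}\to 0$. Since $\int (p_{\Lambda}^{t,\ast}-p_{\Lambda}^{t,i})\,d\mu_{\Lambda}=0$, the positive and negative parts have equal integrals, so
\[
\int |p_{\Lambda}^{t,i}-p_{\Lambda}^{t,\ast}|\,d\mu_{\Lambda} = 2\int g_i\,d\mu_{\Lambda}\to 0.
\]
Combining this with the first display gives the claim.

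I expect essentially no obstacle here, since the heavy lifting is done by \cref{theorem:stability}. The one point to verify is that pointwise a.e. convergence alone does not give $L^1$ convergence, so the dominating function must come from the positive part bounded by the limit density. This works only because all the densities are probability densities with total mass exactly one, and that must be noted explicitly. An alternative would be to work on $\cD$, showing $d_{\TV}(p_{\cD}^i,p_{\cD}^\ast)\to 0$ and using the equality in \cref{theorem:stability}. However, a.e. convergence of the pushforward densities is not immediate, so I would go through the inequality on $\Lambda$ as above.
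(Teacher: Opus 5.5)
Your proposal is correct and matches the paper's proof: Scheffé's lemma gives $d_{\TV}(p_{\Lambda}^{t,i},p_{\Lambda}^{t,\ast})\to 0$, and the inequality in \cref{theorem:stability} carries this over to the SCP solutions. The only difference is that the paper cites Scheffé's theorem, while you prove it inline via dominated convergence on the positive part.
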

    \begin{proof}
            By Scheff\'e's theorem, $d_{\TV}(p_{\Lambda}^{t,i},p_{\Lambda}^{t,\ast})\to 0$. By  \cref{theorem:stability}, $d_{\TV}(\hat p_{\Lambda}^{i},\hat p_{\Lambda}^{\ast})\to 0$.
    \end{proof}

This theorem opens up several  applications we discussed in the introduction. For example, we may have a sequence of TGDs influenced by a random effect with decaying variance across different iterations of an experiment, say due to increasing precision of experimental controls.

\section{Weak Continuity}
\label{section:weak:continuity}

Next, we derive conditions under which weak convergence of the TGD gives weak convergence of the SCP solution. Our results effectively establish weak continuity of the SCP operator over a specific space of probability measures.\footnote{We follow the probabilist convention of writing weak convergence in place of the weak* convergence of functional analysis.} 

More practically, the theorems support applications where the parameters of interest are integer-valued. The regularity conditions such as $P_{\Lambda}^t\in\mathscr{P}_{\Lambda,Q}$ (Assumption \ref{assumption:pushforward}) are required to directly apply the SCP algorithm, but we can instead consider a sequence of TGDs approximating the discrete distribution and the associated sequence of SCP solutions. We can similarly approach scenarios where the input parameter is fully deterministic but unknown, as in \cite{kennedy_ohagan}.

 Let $\cP(\Lambda)$ be the set of all Borel probability measures on $\Lambda$. For a sequence of measures $\{\nu_i\}_{i\in\NN}\subset \cP(\Lambda)$ and $\nu^{\ast}\in \cP(\Lambda)$, we write $\nu_i\stackrel{w}{\to}\nu^{\ast}$ if $\nu_i$ converges weakly to $\nu^{\ast}$. Define the weak sequential closure of $\mathscr{P}_{\Lambda,Q}$ by \begin{align*}
    \overline{\mathscr{P}_{\Lambda,Q}}=\{\nu\in \cP(\Lambda):\text{ there exists }\{\nu_i\}_{i\in\NN} \subset \mathscr{P}_{\Lambda,Q}\text{ such that } \nu_i\stackrel{w}{\to} \nu\}.
\end{align*} This set contains precisely all measures that may be limits of sequences of TGDs. Clearly $\mathscr{P}_{\Lambda,Q}\subseteq \overline{\mathscr{P}_{\Lambda,Q}}$, but we demonstrate in \cref{theorem:characterize:weak:sequential:closure} that $\overline{\mathscr{P}_{\Lambda,Q}}$ also contains any measure of the form $\alpha \nu_1+(1-\alpha)\nu_2$ for $\alpha\in[0,1]$ where $\nu_1\in \mathscr{P}_{\Lambda,Q}$ and $\nu_2$ is a countable mixture of Dirac measures in $\Lambda$.

Our main result, \cref{riesz:theorem}, states that weak convergence of a sequence of TGDs in $\mathscr{P}_{\Lambda,Q}$ to nearly any element of $\overline{\mathscr{P}_{\Lambda,Q}}$ implies weak convergence of the associated SCP solutions. The proof is an application of the Riesz Representation Theorem with disintegration. The argument relies on continuity properties of the surface integrals established in \cite{esip_2025}.  In particular, we consider the ratio $\frac{g_f(q)}{\tilde \rho_{p,\cD}(q)}$ on $\cD$ for any bounded, continuous function $f$, where \[g_f(q)=\int_{Q^{-1}(q)} \frac{ f\cdot \rho_p}{\sqrt{\det J_{Q}J_{Q}^T}} ds,\quad  \tilde \rho_{p,\cD}(q)=\int_{Q^{-1}(q)} \frac{ \rho_p}{\sqrt{\det J_{Q}J_{Q}^T}} ds.\] By Lemma \ref{lemma:inner:integral:equality}, $g_f(q)$ can be seen as the expectation of $f$ with respect to the regular conditional prior measure on the contour $Q^{-1}(q)$.   We show that weak convergence for the SCP solutions $\{\hat P_{\Lambda}^{t,i}\}_{i\in\NN}$ corresponding to $\{P_{\Lambda}^{t,i}\}_{i\in\NN}$ follows if the sequence  $\int_{\cD} \frac{g_f(q)}{\tilde \rho_{p,\cD}(q)} QP_{\Lambda}^{t,i}(q)$ converges to $\int_{\cD} \frac{g_f(q)}{\tilde \rho_{p,\cD}(q)} QP_{\Lambda}^{t,\ast}(q)$ as $i\to\infty$ for all appropriate $f$. We then explicitly construct the limiting SCP solution measure in a few special cases (\cref{theorem:weak:convergence:nice:case,theorem:weak:convergence:point:mass,corollary:weak:convergence:mixture}). 

This section involves two more assumptions, stated below. Assumption \ref{assumption:continuous:surface} is required throughout. Assumption \ref{assumption:surface:integral}, like Assumption \ref{assumption:pushforward}, pertains to the TGD, so as before, we are explicit wherever Assumption \ref{assumption:pushforward} and \ref{assumption:surface:integral} are required. We continue to use Assumptions \ref{assumption:compact}-\ref{assumption:prior}. 

\begin{assumption} \label{assumption:continuous:surface}
Assuming the boundary of $\Lambda$ is given by $\partial \Lambda=\{\lambda:B(\lambda)=0\}$, where $B$ is continuously differentiable on $U_{\Lambda}$, if $\partial \Lambda\cap Q^{-1}(q)$ is non-empty for $q\in\cD$, then it is the set of $\lambda$ solving the system $Q(\lambda)=q$ and $B(\lambda)=0$. For such $q$, $\left(J_Q \; J_B\right)^T$ is of full rank on $\partial \Lambda\cap Q^{-1}(q)$.
\end{assumption}

\begin{assumption} \label{assumption:surface:integral} The TGD $P_{\Lambda}^t$ is such that $QP_{\Lambda}^{t}(\{q\in\cD:\tilde\rho_{p,\cD}(q)=0\})=0$, and there exist infinitely many $k\in\NN$ such that $QP_{\Lambda}^{t}(\{q\in\cD:\tilde\rho_{p,\cD}(q)=1/k\})=0$. 
\end{assumption}

 Assumption \ref{assumption:continuous:surface} was first introduced in \cite{esip_2025} to induce continuity in the SCP solution densities. We use Assumption \ref{assumption:surface:integral}  to ensure that the surface integral $\tilde\rho_{p,\cD}$ has some constraints on its range with respect to the limiting pushforward $Q P_{\Lambda}^{t,\ast}$ for some sequence $\{P_{\Lambda}^{t,i}
 \}_{i\in\NN}\subset \mathscr{P}_{\Lambda,Q}$. We note that for each $i$, we already have $QP_{\Lambda}^{t,i}(\{\tilde\rho_{p,\cD}=0\})=0$. This is because the proof of \cite[Theorem 3.5]{esip_2025}  establishes that $\mu_{\cD}(\{\tilde\rho_{p,\cD}=0\})=0$, and $QP_{\Lambda}^{t,i}\ll \mu_{\cD}$ since  $P_{\Lambda}^{t,i}\in\mathscr{P}_{\Lambda,Q}$. The impact of Assumption \ref{assumption:surface:integral} arises when applied to $QP_{\Lambda}^{t,\ast}$, which may not have a density. The second part of the assumption, i.e., that $QP_{\Lambda}^{t,\ast}(\{\tilde\rho_{p,\cD}=1/k\})=0$ for infinitely many $k$, is a consequence of the proofs employing a  sequence of functions $\{h_k\}_{k\in\NN}$ that approximates the ratio $g_f/\tilde\rho_{p,\cD}$. Each $h_k$ zeroes out the inputs where $\tilde\rho_{p,\cD}>1/k$ to preserve boundedness of this ratio, so this second condition ensures the discontinuity set of each $h_k$ is of null measure, permitting the weak convergence results.

In each of the remaining theorems, $\{P_{\Lambda}^{t,i}\}_{i\in\NN}$ is a sequence of measures in $\mathscr{P}_{\Lambda,Q}$. We assume $P_p$ is a prior on $\Lambda$ with a density $\rho_p = \frac{d P_p}{d\mu_{\Lambda}}$ that is a.e. continuous. The corresponding sequence of SCP solutions is written as $\{\hat P_{\Lambda}^{i}\}_{i\in\NN}$. 

\begin{theorem}[Weak Convergence] \label{riesz:theorem}  Suppose $P_{\Lambda}^{t,i}\stackrel{w}{\to} P_{\Lambda}^{t,\ast}$, where $P_{\Lambda}^{t,\ast}\in\overline{\mathscr{P}_{\Lambda,Q}}$ and satisfies Assumption \ref{assumption:surface:integral}. Then $\hat P_{\Lambda}^{i}\stackrel{w}{\to} \hat\nu$, for some Borel probability measure $\hat \nu$ on $\Lambda$.
\end{theorem}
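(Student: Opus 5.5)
The plan is to define a candidate limit functional on $C(\Lambda)$, show that it is positive, linear and normalized, and then invoke the Riesz Representation Theorem to obtain $\hat\nu$. Since $\Lambda$ is compact (Assumption \ref{assumption:compact}), bounded continuous functions are exactly $C(\Lambda)$. Weak convergence therefore reduces to showing that $\int f\,d\hat P_{\Lambda}^i$ converges for every $f\in C(\Lambda)$. The limit $L(f)$ is then automatically linear, positive, and satisfies $L(1)=1$. Riesz gives a Borel probability measure $\hat\nu$ with $L(f)=\int f\,d\hat\nu$, and no tightness issue arises because $\Lambda$ is compact.

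First I would rewrite $\int f\,d\hat P_{\Lambda}^i$ on $\cD$. Using \cref{eSIP:density:solution} together with the disintegration of $\mu_{\Lambda}$ (the coarea-type surface integral representation from \cite{esip_2025}), I get
\[\int_\Lambda f\hat p^i_\Lambda\,d\mu_\Lambda=\int_{\cD}\frac{\rho_{\cD,i}(q)}{\tilde\rho_{p,\cD}(q)}\int_{Q^{-1}(q)}\frac{f\rho_p}{\sqrt{\det J_QJ_Q^T}}\,ds\,dq=\int_{\cD}\frac{g_f}{\tilde\rho_{p,\cD}}\,d(QP_\Lambda^{t,i}).\]
Here I use that $QP_\Lambda^{t,i}(\{\tilde\rho_{p,\cD}=0\})=0$. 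Moreover, $P_\Lambda^{t,i}\stackrel{w}{\to}P_\Lambda^{t,\ast}$ and $Q$ is continuous, so the continuous mapping theorem gives $QP_\Lambda^{t,i}\stackrel{w}{\to}QP_\Lambda^{t,\ast}$. Set $\phi_f=g_f/\tilde\rho_{p,\cD}$. This is a conditional prior expectation, so $|\phi_f|\le\|f\|_\infty$ wherever $\tilde\rho_{p,\cD}>0$. By the continuity results of \cite{esip_2025} under Assumption \ref{assumption:continuous:surface} and a.e. continuity of $\rho_p$, both $g_f$ and $\tilde\rho_{p,\cD}$ are continuous on $\cD$. Hence $\phi_f$ is continuous on the open set $\{\tilde\rho_{p,\cD}>0\}$.

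The main obstacle is that $\phi_f$ need not be continuous, or even defined, where $\tilde\rho_{p,\cD}=0$, so the portmanteau theorem cannot be applied directly. I would truncate. Choose $k$ from the infinite set in Assumption \ref{assumption:surface:integral}, and let
\[h_k=\phi_f\,\chi_{\{\tilde\rho_{p,\cD}>1/k\}}.\]
This function is bounded by $\|f\|_\infty$, and its discontinuity set lies in $\{\tilde\rho_{p,\cD}=1/k\}$, which is $QP^{t,\ast}_\Lambda$-null. The extended portmanteau (continuous mapping) theorem then gives
\[\int h_k\,d(QP^{t,i}_\Lambda)\to\int h_k\,d(QP^{t,\ast}_\Lambda).\]

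For the remainder, the error is at most $\|f\|_\infty\,QP^{t,i}_\Lambda(\{\tilde\rho_{p,\cD}\le 1/k\})$. The set $\{\tilde\rho_{p,\cD}\le1/k\}$ is closed in $\cD$, so the portmanteau limsup bound gives
\[\limsup_i QP^{t,i}_\Lambda(\{\tilde\rho_{p,\cD}\le1/k\})\le QP^{t,\ast}_\Lambda(\{\tilde\rho_{p,\cD}\le1/k\}).\]
This tends to $QP^{t,\ast}_\Lambda(\{\tilde\rho_{p,\cD}=0\})=0$ as $k\to\infty$ along the admissible $k$, by continuity from above and the first part of Assumption \ref{assumption:surface:integral}. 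A standard $\varepsilon/3$ argument then shows that $\{\int\phi_f\,d(QP^{t,i}_\Lambda)\}_i$ is Cauchy and converges to
\[L(f)=\int_{\{\tilde\rho_{p,\cD}>0\}}\phi_f\,d(QP^{t,\ast}_\Lambda).\]
This limit is well-defined because $\phi_f$ is bounded. Applying the Riesz step from the first paragraph finishes the proof. If continuity of $g_f$ requires more regularity of $f$, I would first prove convergence for a dense class, such as Lipschitz $f$, and extend to all of $C(\Lambda)$ using the uniform bound $|\int f\,d\hat P^i_\Lambda|\le\|f\|_\infty$.
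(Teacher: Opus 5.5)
Your proposal is correct and is essentially the paper's proof. Both arguments use the same rewrite $\int_{\Lambda} f\,d\hat P^{i}_{\Lambda}=\int_{\cD} (g_f/\tilde\rho_{p,\cD})\,d(QP^{t,i}_{\Lambda})$, the same truncation $h_k$ at an admissible level from Assumption \ref{assumption:surface:integral}, the closed-set portmanteau bound for the tails, the a.e.-continuity portmanteau for $h_k$, and Riesz representation on compact $\Lambda$ (you define the functional as the limit, the paper as the integral against $QP^{t,\ast}_{\Lambda}$, which is equivalent). One small tightening: justify $|g_f/\tilde\rho_{p,\cD}|\le\|f\|_\infty$ directly from $|g_f|\le\|f\|_\infty\,\tilde\rho_{p,\cD}$ (since $\rho_p\ge 0$), which holds at every $q$, rather than via the conditional-expectation interpretation, which holds only $QP_p$-a.e.\ and so gives nothing when $QP^{t,\ast}_{\Lambda}$ is singular to $QP_p$.
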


This  result ensures weak convergence of the SCP solutions without constructing the limiting measure $\hat\nu$. We construct $\hat\nu$  by rewriting the limiting integral $\int_{\cD}\frac{g_f(q)}{\tilde\rho_{p,\cD}(q)}d(QP_{\Lambda}^{t,\ast})(q)$ in the form $\int_{\Lambda} f (\lambda)d\hat\nu(\lambda)$. The difficulty is that in general, $QP_{\Lambda}^{t,\ast}$ is hard to analyze.  In some special cases, it can be clearly related back to the input space. In the first special case, we assume the limiting distribution is itself an element of $\mathscr{P}_{\Lambda,Q}$, and claim that the SCP solution in the limit is precisely the SCP solution applied directly to the limiting TGD.

\begin{theorem}[Constructive Weak Convergence] \label{theorem:weak:convergence:nice:case}  Suppose $P_{\Lambda}^{t,i}\stackrel{w}{\to} P_{\Lambda}^{t,\ast}$, where $P_{\Lambda}^{t,\ast}\in\mathscr{P}_{\Lambda,Q}$ and satisfies the second part of Assumption \ref{assumption:surface:integral}.  Then $\hat P_{\Lambda}^{i}\stackrel{w}{\to} \hat P_{\Lambda}^{\ast}$, where $\hat P_{\Lambda}^{\ast}$ is the solution of the SCP to $P_{\Lambda}^{t,\ast}$.
\end{theorem}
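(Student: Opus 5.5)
The plan is to reduce the claim to \cref{riesz:theorem} and then identify the limit. First I check the hypotheses of \cref{riesz:theorem}. Since $P_{\Lambda}^{t,\ast}\in\mathscr{P}_{\Lambda,Q}\subseteq\overline{\mathscr{P}_{\Lambda,Q}}$, only the first part of Assumption \ref{assumption:surface:integral} is missing. It holds automatically here: as noted after that assumption, the proof of \cite[Theorem 3.5]{esip_2025} gives $\mu_{\cD}(\{\tilde\rho_{p,\cD}=0\})=0$, and $QP_{\Lambda}^{t,\ast}\ll\mu_{\cD}$. So $P_{\Lambda}^{t,\ast}$ satisfies all of Assumption \ref{assumption:surface:integral}, and \cref{riesz:theorem} yields $\hat P_{\Lambda}^i\stackrel{w}{\to}\hat\nu$ for some Borel probability measure $\hat\nu$.

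It remains to show $\hat\nu=\hat P_{\Lambda}^{\ast}$. For this I use the characterization of the limit coming out of the proof of \cref{riesz:theorem}. For every bounded continuous $f$,
\[\int_\Lambda f\,d\hat\nu=\lim_{i\to\infty}\int_\Lambda f\,d\hat P_{\Lambda}^i=\int_{\cD}\frac{g_f(q)}{\tilde\rho_{p,\cD}(q)}\,d(QP_{\Lambda}^{t,\ast})(q).\]
If this identity is not available verbatim, I will re-derive it. Start from the representation $\int f\,d\hat P_{\Lambda}^i=\int_{\cD}(g_f/\tilde\rho_{p,\cD})\,d(QP_{\Lambda}^{t,i})$, obtained from \cref{eSIP:density:solution} together with the coarea/disintegration identity of Lemma \ref{lemma:inner:integral:equality}. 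Then truncate with the functions $h_k$, which are continuous off $\{\tilde\rho_{p,\cD}=1/k\}$, a $QP_{\Lambda}^{t,\ast}$-null set for infinitely many $k$. Apply the continuous mapping theorem to $QP_{\Lambda}^{t,i}\stackrel{w}{\to}QP_{\Lambda}^{t,\ast}$, which holds because $Q$ is continuous. Finally, control the tails, using that $|g_f/\tilde\rho_{p,\cD}|\le\|f\|_\infty$.

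Next I evaluate the right-hand side using $d(QP_{\Lambda}^{t,\ast})=\rho_{\cD}^\ast\,d\mu_{\cD}$ and the formula $\hat p^\ast_\Lambda=\rho^\ast_{\cD}(Q)\rho_p/\tilde\rho_{p,\cD}(Q)$. The coarea formula gives
\[\int_\Lambda f\hat p^\ast_\Lambda\,d\mu_\Lambda=\int_{\cD}\int_{Q^{-1}(q)}\frac{f\rho_p\,\rho_{\cD}^\ast(q)}{\tilde\rho_{p,\cD}(q)\sqrt{\det J_QJ_Q^T}}\,ds\,dq=\int_{\cD}\frac{g_f(q)}{\tilde\rho_{p,\cD}(q)}\rho^\ast_{\cD}(q)\,dq.\]
Hence $\int f\,d\hat\nu=\int f\,d\hat P_{\Lambda}^\ast$ for all bounded continuous $f$. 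Since $\Lambda$ is a compact metric space, such integrals determine a Borel probability measure, so $\hat\nu=\hat P^\ast_\Lambda$.

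The main obstacle is extracting the limit identity in the second paragraph, specifically the passage to the limit under weak convergence for the merely a.e.-defined, possibly discontinuous and unbounded-denominator function $g_f/\tilde\rho_{p,\cD}$. I would handle it through the $h_k$ truncation, and I will need the continuity of $g_f$ and $\tilde\rho_{p,\cD}$ on $\{\tilde\rho_{p,\cD}>0\}$ from \cite{esip_2025}, which rests on Assumption \ref{assumption:continuous:surface} and the a.e. continuity of $\rho_p$. The tail term on $\{\tilde\rho_{p,\cD}\le 1/k\}$ must vanish uniformly in $i$. I expect to handle this through the portmanteau theorem applied to the closed set $\{\tilde\rho_{p,\cD}\le 1/k\}$, combined with $QP^{t,\ast}_\Lambda(\{\tilde\rho_{p,\cD}=0\})=0$.
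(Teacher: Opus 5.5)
Your proposal is correct and follows the paper's proof. You verify the first part of Assumption \ref{assumption:surface:integral} from $QP_{\Lambda}^{t,\ast}\ll\mu_{\cD}$, rerun the truncation argument of \cref{riesz:theorem} to reach \cref{eq:weak:continuity:intermediate}, and identify $\int_{\cD}(g_f/\tilde\rho_{p,\cD})\,d(QP_{\Lambda}^{t,\ast})$ with $\int_{\Lambda}f\,d\hat P_{\Lambda}^{\ast}$, just as the paper does. The one difference is cosmetic: you apply the coarea formula (first claim of \cref{lemma:surface:integral}) directly to $f\hat p_{\Lambda}^{\ast}$, whereas the paper goes through the disintegration of $\mu_{\Lambda}$, the chain $\rho_{\cD,\ast}/\tilde\rho_{\cD}=d(QP_{\Lambda}^{t,\ast})/d(Q\mu_{\Lambda})$, and \cref{lemma:inner:integral:equality}, which is itself derived from that same coarea formula.
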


The next special case relaxes the requirement that the limiting TGD is in $\mathscr{P}_{\Lambda,Q}$ by considering a Dirac delta measure at some $\lambda^{\ast}$, where $\lambda^{\ast}$ is assumed to belong to a set of prior probability 1. We find that the solution is a measure concentrating on the contour corresponding to $Q(\lambda^{\ast})$.  We emphasize that for weak convergence, \cref{riesz:theorem}  only requires that $\lambda^{\ast}$ satisfies $\tilde\rho_{p,\cD}(\lambda^{\ast})>0$ to give Assumption \ref{assumption:surface:integral}. In \cref{theorem:weak:convergence:point:mass}, the restriction to a set of prior probability 1 is required to explicitly construct the limiting SCP measure. This is due to the construction relying on an almost sure representation of $\frac{g_f}{\tilde\rho_{p,\cD}}$ from \cref{lemma:inner:integral:equality}.

\begin{theorem}[Constructive Weak Convergence with Dirac Measure] \label{theorem:weak:convergence:point:mass} There exists  $E\in\cB_{\cD}$ with $P_p(Q^{-1}(E))=1$ with the following property. Let $\lambda^{\ast}\in Q^{-1}(E)$ and $q^{\ast}=Q(\lambda^{\ast})$. Suppose $P_{\Lambda}^{t,i}\stackrel{w}{\to} \delta_{\lambda^{\ast}}$, where $\delta_x$ denotes the Dirac delta measure at $x\in\Lambda$. Then $\hat P_{\Lambda}^i \stackrel{w}{\to} \hat P_{\Lambda}^{\ast}$, where $\hat P_{\Lambda}^{\ast}$ is the regular conditional probability $P_p(\,\cdot\,| q^{\ast})$.
\end{theorem}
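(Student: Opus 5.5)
Taking $f$ to be bounded and continuous on $\Lambda$, I would write the SCP solution via \cref{eSIP:density:solution}. The surface-integral representation then gives
\[
\int_\Lambda f\,d\hat P_{\Lambda}^i=\int_{\cD}\frac{g_f(q)}{\tilde\rho_{p,\cD}(q)}\,\rho_{\cD,i}(q)\,dq=\int_{\Lambda}\frac{g_f(Q(\lambda))}{\tilde\rho_{p,\cD}(Q(\lambda))}\,dP_{\Lambda}^{t,i}(\lambda).
\]
The first equality uses the coarea formula, and the second uses the change of variables for the pushforward $QP_{\Lambda}^{t,i}$. By \cref{lemma:inner:integral:equality}, there is a set $E\in\cB_{\cD}$ of full $QP_p$-measure on which $g_f(q)/\tilde\rho_{p,\cD}(q)=\int f\,dP_p(\cdot\,|q)$. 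I would take this set, intersected with $\{\tilde\rho_{p,\cD}>0\}$, as the $E$ in the statement, and choose it independently of $f$. To do so, I would apply the lemma to a countable convergence-determining family of continuous $f$ (for instance polynomials with rational coefficients, which are dense in $C(\Lambda)$ because $\Lambda$ is compact) and intersect the resulting full-measure sets. Then $P_p(Q^{-1}(E))=1$.

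Fix $\lambda^\ast\in Q^{-1}(E)$. Since $\tilde\rho_{p,\cD}(q^\ast)>0$, the point mass $\delta_{\lambda^\ast}$ satisfies the first part of Assumption \ref{assumption:surface:integral}. Its pushforward is $\delta_{q^\ast}$, which charges the level set $\{\tilde\rho_{p,\cD}=1/k\}$ for at most one $k$, so the second part holds as well. Also $\delta_{\lambda^\ast}\in\overline{\mathscr{P}_{\Lambda,Q}}$ because it is the weak limit of the given sequence. \Cref{riesz:theorem} therefore applies and gives $\hat P_{\Lambda}^i\stackrel{w}{\to}\hat\nu$ for some probability measure $\hat\nu$.

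To identify $\hat\nu$, I would use the mechanism underlying \cref{riesz:theorem}. The proof of that theorem shows that
\[
\int f\,d\hat P_\Lambda^i\to\int_{\cD}\frac{g_f}{\tilde\rho_{p,\cD}}\,d(QP_{\Lambda}^{t,\ast}).
\]
It does so by approximating the ratio with truncations $h_k$, using the continuity of $g_f$ and $\tilde\rho_{p,\cD}$ from \cite{esip_2025} (which needs Assumption \ref{assumption:continuous:surface}), and invoking the continuous mapping theorem; the null discontinuity sets of the $h_k$ are what the second part of Assumption \ref{assumption:surface:integral} secures. With $QP_{\Lambda}^{t,\ast}=\delta_{q^\ast}$, the limit equals $g_f(q^\ast)/\tilde\rho_{p,\cD}(q^\ast)$. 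Because $q^\ast\in E$, this equals $\int f\,dP_p(\cdot\,|q^\ast)$. Since $\int f\,d\hat\nu=\int f\,dP_p(\cdot|q^\ast)$ for all bounded continuous $f$, we get $\hat\nu=P_p(\cdot\,|q^\ast)$.

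The main obstacle is the pointwise identity at the single point $q^\ast$. \Cref{lemma:inner:integral:equality} gives the identity only almost everywhere, and an almost-everywhere statement says nothing about a fixed point. This is exactly why the statement restricts $\lambda^\ast$ to $Q^{-1}(E)$, and why $E$ must be chosen uniformly over $f$. I would handle this with the countable dense family above, extending from the family to all $f\in C(\Lambda)$ by uniform approximation. The extension works because both sides are linear in $f$ and are bounded by $\|f\|_\infty$: $g_f/\tilde\rho_{p,\cD}$ is bounded in this way since it is an average against the positive weight $\rho_p/\sqrt{\det J_QJ_Q^T}$. A secondary check is that the limit formula from \cref{riesz:theorem} is valid at a single point, which requires continuity of $g_f/\tilde\rho_{p,\cD}$ at $q^\ast$. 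I would get this from the continuity results of \cite{esip_2025} together with $\tilde\rho_{p,\cD}(q^\ast)>0$.
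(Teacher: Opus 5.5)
Your proposal is correct and follows the paper's proof: you write $\int f\,d\hat P_\Lambda^i$ in surface-integral form, take the limit from the proof of \cref{riesz:theorem} evaluated at $\delta_{q^\ast}$, and apply \cref{lemma:inner:integral:equality} at $q^\ast\in E$; the one step you leave implicit is the concentration property $P_p(Q^{-1}(q^\ast)\mid q^\ast)=1$, which the paper uses to turn the integral over $Q^{-1}(q^\ast)$ into one over $\Lambda$. Your construction of $E$ from a countable uniformly dense family, extended to all $f$ by linearity and the $\|f\|_\infty$ bound, genuinely tightens the paper's argument: the paper defines $E$ through \cref{eq:surface:integral:representation} for a single $f$, so its $E$ silently depends on $f$, whereas the statement needs one $E$ that works for every bounded continuous $f$.
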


\Cref{theorem:weak:convergence:point:mass} immediately extends to finite mixtures of Dirac measures, whose supporting points are chosen from this prior probability 1 set, and consequently, any mixture of such finite mixtures with suitable measures in $\mathscr{P}_{\Lambda,Q}$. However, we are unable to extend the technique to countable mixtures due to the requirements of Assumption \ref{assumption:surface:integral}.

\begin{corollary}[Constructive Weak Convergence of Mixtures] \label{corollary:weak:convergence:mixture} There exists $E\in\cB_{\cD}$ with $P_p(Q^{-1}(E))=1$ with the following property. Let $\alpha\in[0,1]$, $\{\omega_i\}_{i=1}^r\subset[0,1]$ with $\sum_{i=1}^r \omega_i=1$, and $\{\lambda_i\}_{i=1}^r\subset Q^{-1}(E)$. Suppose $P_{\Lambda}^{t,i}\stackrel{w}{\to} \nu$ where $\nu=\alpha\nu_1+(1-\alpha)\nu_2$, $\nu_1\in\mathscr{P}_{\Lambda,Q}$ and $\nu_2=\sum_{i=1}^r\omega_i\delta_{\lambda_i}$. Assume further $\nu_1$ satisfies the second part of Assumption \ref{assumption:surface:integral}. Then $\hat P_{\Lambda}^i \stackrel{w}{\to} \hat\nu$ where $\hat\nu:=\alpha \hat\nu_1 + (1-\alpha) \sum_{=1}^r\omega_i P_{p}(\,\cdot\,  | q_i)$,  $\hat\nu_1$ is the SCP solution to $\nu_1$ and $P_p(\,\cdot\,  | q_i)$  is the regular conditional probability for $P_p$ given $q_i=Q(\lambda_i)$. \end{corollary}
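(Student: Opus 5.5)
The plan is to reduce the corollary to the integral criterion behind \cref{riesz:theorem}, using linearity of the limit in the TGD. For bounded continuous $f$ on $\Lambda$, the SCP solution integrates as
\[\int_\Lambda f\,d\hat P_{\Lambda}^i=\int_{\cD}\frac{g_f(q)}{\tilde\rho_{p,\cD}(q)}\,d(QP_{\Lambda}^{t,i})(q).\]
This follows from \cref{eSIP:density:solution} together with the coarea/disintegration representation. The task is therefore to identify the limit of the right-hand side when $QP_{\Lambda}^{t,i}\stackrel{w}{\to} Q\nu$. Continuity of $Q$ gives this weak convergence of pushforwards.

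\textbf{Choice of $E$.} Take $E$ to be the set from \cref{theorem:weak:convergence:point:mass}. On it $\tilde\rho_{p,\cD}>0$ and the almost-sure identity $g_f(q)/\tilde\rho_{p,\cD}(q)=\int f\,dP_p(\cdot| q)$ from \cref{lemma:inner:integral:equality} holds.

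\textbf{Assumption \ref{assumption:surface:integral} for $\nu$.} I will check that $\nu$ satisfies the assumption, so that the truncation argument underlying \cref{riesz:theorem} applies.
\begin{itemize}[leftmargin=*]
\item The first part holds for the $\nu_1$ piece because $Q\nu_1\ll\mu_{\cD}$ and $\mu_{\cD}(\{\tilde\rho_{p,\cD}=0\})=0$. It holds for the atoms because each $q_i\in E$ has $\tilde\rho_{p,\cD}(q_i)>0$.
\item For the second part, $Q\nu_1$ charges $\{\tilde\rho_{p,\cD}=1/k\}$ for only finitely many excluded $k$ by hypothesis. The finitely many atoms $q_i$ each exclude at most one $k$. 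Hence infinitely many $k$ remain.
\end{itemize}

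\textbf{Passing to the limit.} Using the truncated functions $h_k$ (continuous off a $Q\nu$-null set and bounded), I would run the argument of \cref{riesz:theorem}. This yields
\[\int f\,d\hat P^i_\Lambda\to\int_{\cD}\frac{g_f}{\tilde\rho_{p,\cD}}\,d(Q\nu).\]
Linearity then splits the limit as
\[\alpha\int_{\cD}\frac{g_f}{\tilde\rho_{p,\cD}}\,d(Q\nu_1)\;+\;(1-\alpha)\sum_{i=1}^r\omega_i\,\frac{g_f(q_i)}{\tilde\rho_{p,\cD}(q_i)}.\]
The first term equals $\int f\,d\hat\nu_1$, as in the identification step of \cref{theorem:weak:convergence:nice:case}: rewrite via the density formula \cref{eSIP:density:solution} applied to $\nu_1\in\mathscr{P}_{\Lambda,Q}$. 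Each atom term equals $\int f\,dP_p(\cdot| q_i)$ by the choice of $E$, exactly as in \cref{theorem:weak:convergence:point:mass}. Summing gives $\int f\,d\hat\nu$ for all bounded continuous $f$, which is the asserted weak convergence. The degenerate cases $\alpha\in\{0,1\}$ are covered automatically.

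\textbf{Main obstacle.} The delicate point is that the sequence converges to the mixture $\nu$, not componentwise. We therefore cannot simply invoke the two earlier theorems on separate subsequences. Instead, the limit must be computed once against $Q\nu$, and only afterwards split by linearity.

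This requires that the continuity-set argument for $h_k$ (via the portmanteau theorem) and the passage $k\to\infty$ (via monotone or dominated convergence) hold for the combined measure $Q\nu$. Both depend on the verification of Assumption \ref{assumption:surface:integral} above. If the proof of \cref{riesz:theorem} is stated only abstractly, I would repeat its $h_k$ approximation explicitly for $Q\nu$. The uniform-in-$i$ control of the truncation error comes from the $\tilde\rho_{p,\cD}$ tails, with each $Q P^{t,i}_\Lambda$ having no mass where $\tilde\rho_{p,\cD}=0$.
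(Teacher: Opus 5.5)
Your proposal is correct and follows the paper's proof essentially step for step. It uses the same set $E$ from \cref{eq:definition:E} and the same check of Assumption \ref{assumption:surface:integral} for $Q\nu$, runs the $h_k$ truncation argument once against $Q\nu$, then splits the limit by linearity and identifies the pieces via \cref{theorem:weak:convergence:nice:case} and \cref{theorem:weak:convergence:point:mass}. One wording slip: the hypothesis on $\nu_1$ only gives infinitely many $k$ with $Q\nu_1(\{\tilde\rho_{p,\cD}=1/k\})=0$, not that just finitely many $k$ are charged; your conclusion still holds, since the $r$ atoms remove at most $r$ values of $k$ from that infinite set.
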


In conclusion, we show for a broad class of weakly convergent TGD sequences, the associated SCP solution sequence also weakly converges. We  construct
the limiting SCP solution measure in two cases, one in which the limiting TGD belongs to $\mathscr{P}_{\Lambda,Q}$, and another in which it is a Dirac measure. Consequently, we were able to construct the limiting measure with any finite mixtures of these scenarios. These weak convergence results require assumptions on the  zero set of the surface integral $\tilde\rho_{p,\cD}$. Moreover,  in the constructive Dirac measure scenarios, we must ensure the regular conditional measure admits a surface integral representation, which  only holds almost surely with respect to the prior.

\section{Examples}
\label{section:example}

We present two examples  to illustrate our theoretical continuity results as well as some of the practical strategies involved with using the SCP.\footnote{The associated code can be found at \url{https://github.com/akprasadan/escp-continuity}.}

\subsection{Gaussian Mixture}

\begin{figure}[t]
    \centering
    \includegraphics[width=0.9\textwidth]{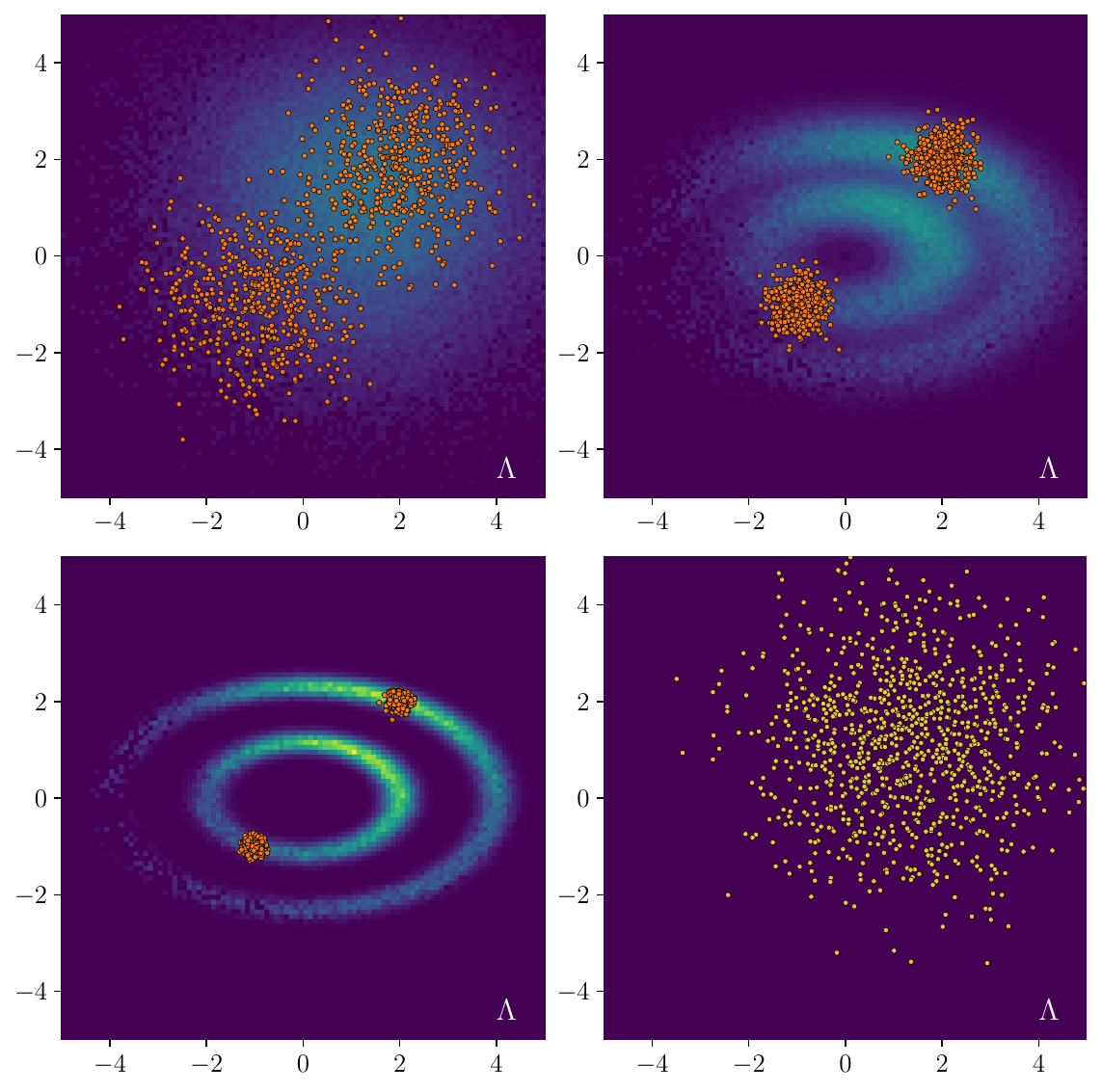} % Adjust width as needed
    \caption{The top row and bottom left plot show a heatmap of the estimated SCP solutions $\hat P_{\Lambda}^i$ on $\Lambda=[-5,5]\times[-5,5]$ with output map $Q(\lambda_1,\lambda_2)=\lambda_1^2+3\lambda_2^2$. Over each heatmap, we overlay as points a subsample of the respective choice of TGD $P_{\Lambda}^{t,i}$. The bottom-right plot shows a subsample of the generated prior datapoints in $\Lambda$.}
    \label{fig:final_example}
\end{figure}

Our first simulation is illustrated in \cref{fig:final_example}. We set $\Lambda=[-5,5]\times[-5,5]$, $Q(\lambda_1,\lambda_2)=\lambda_1^2+3\lambda_2^2$, and consider a sequence of TGDs converging to an equally weighted mixture of point masses at $\mu_A=(-1,-1)$ and $\mu_B=(2,2)$. In particular, we take a mixture of truncated (on $\Lambda$) Gaussians centered at these points, with variance decaying from $1$ to $0.01$ and then $0.01$. The prior (in the bottom-right plot) is a truncated Gaussian centered at $(1.25,1.25)$. We use 100,000 samples for the simulated observations, 100,000 for the prior used in the importance sampling algorithm, partition $\cD$ into a grid of 100 bins, and estimate $\hat P_{\Lambda}$ on a homogeneous rectangular tiling $A_{ij}$ where $1\le i,j\le 100$. We then plot the resulting estimate as a heatmap, and overlay a subset of the synthetic TGD points for reference.

Since $Q$ evaluates on $\mu_A$ to $\mu_B$ to $4$ and $16$, respectively, the associated sequence of SCP solutions concentrate on the pair of ellipses $Q^{-1}(4)$ and $Q^{-1}(16)$ in $\Lambda$. Note that the SCP solution does not directly use the TGD points (plotted in orange), but only their evaluations by $Q$. Consequently, it is expected that the heatmap cannot concentrate at $\mu_A$ and $\mu_B$ due to the non-injectivity of $Q$. The prior, which is closer to $\mu_B$ than $\mu_A$, causes the ellipses to place more mass in the top-right portion of the contours. This reflects that the regular conditional measures $P_p(\,\cdot\,|q)$, produced by disintegrating $P_p$ by $Q$, places more weight near $\mu_B$.

\subsection{A Concrete Application}

\begin{figure}[t]
    \centering
    \includegraphics[width=0.9\textwidth]{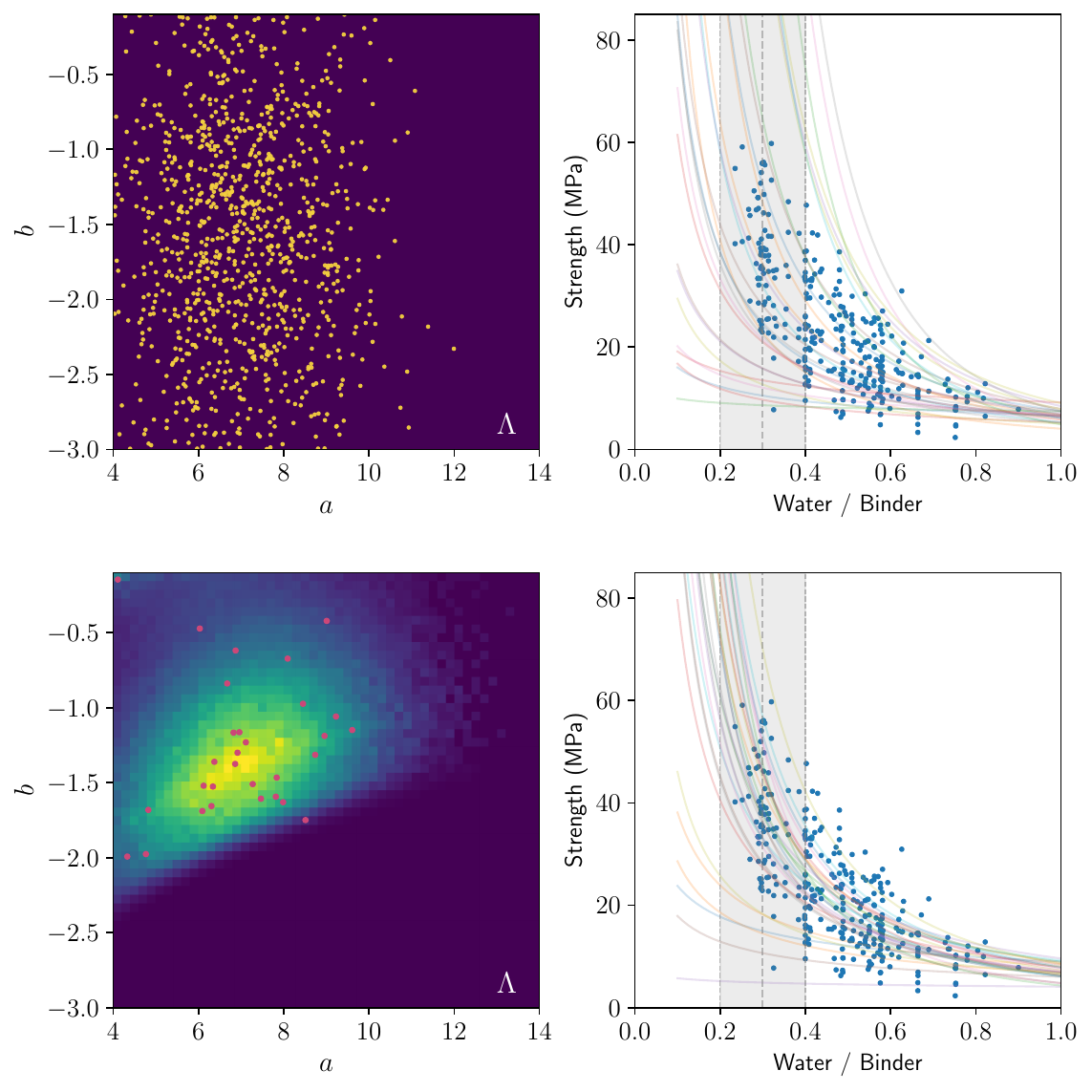} % Adjust width as needed
    \caption{Top left: Scatterplot of points drawn from (truncated) Gaussian prior on $\Lambda=\{(a,b):a\in[4,14], b\in[-3,-0.1]\}$. A point $(a,b)\in \Lambda$ corresponds to a power law $y=aR^b$ for $R>0$. Top right: Compressive strength (megapascals) versus water to binder ratio $R$, with power law fits sampled from prior. Bottom left: SCP solution for $\Lambda$ using Gaussian prior from top left panel and strength values in the vertical strip $R\in[0.2,0.4]$ displayed in the top right panel. Bottom right: Same plot as top right except we plot power laws sampled from the SCP solution, whose corresponding points are also indicated on the bottom left panel.}
    \label{fig:example_three}
\end{figure}

We next apply the SCP algorithm to a popular dataset of concrete attributes from \cite{yeh_1998}.  It has been long known that the compressive strength of concrete is inversely related to the ratio of water to cement used in its mix, e.g., Abrams' Law or its variants. The goal is to solve an inverse problem to estimate this functional form, using strength data.  We  use the water to binder ratio in place of water to cement, where binder refers to the combination of cement, fly ash, and slag added to the mix. \cite{yeh_1998} points out that materials other than cement and water impact strength, and in particular, demonstrate that the water to binder ratio yields better predictive accuracy.

We use the power law $Q(a, b)=aR^b$ for strength, where $R$ is fixed at 0.3, and aim to estimate a distribution over $(a,b)$. By fixing $R$, we have a simple 2-dimensional problem, but note that pairs of $(a,b)$ can then be used to generate predictions for all $R$. In practice, to ensure sufficient data is available, we subset all strength values in the range $R\in[0.2,0.4]$. Moreover, it is known that the strength curve depends on the age of concrete, so we restrict to observations between $0$ and $25$ days old. This yields a subset of samples of size 110. In a followup experiment, we instead consider $25$ to $50$ days, and in doing so, show  the continuity of the solution.

We take as domain $\Lambda=\{(a,b):a\in[4, 14], b\in[-3,-0.1]\}$ and set as prior the  independent coupling $\mathcal{N}(7, 1.5^2)\times \mathcal{N}(-1.5, 1^2)$, truncated to $\Lambda$.  Choosing $\Lambda$ and its prior are key components of SCP estimation. It is important to choose a domain that is broad enough to encompass physical reality, or one informed by real-world expertise. We later demonstrate that our choices are quite permissive, including power laws that are physically unrealistic.  We set $\cD$ as the strength in megapascals. To practically run the SCP algorithm, we need much larger sample sizes, so we use bootstrapping to resample 100,000 observations from the subset of strength data with $R\in[0.2,0.4]$ and age less than $25$ days. We also Gaussian noise with mean $0$, standard deviation $5$ to each resample. The additional noise addresses the low resolution of the data, since the strength is typically only reported in whole digit units of megapascals.

To perform the SCP estimation, we  use a $50\times50$ grid over $\Lambda$, partition $\cD$ into $100$ equally-sized bins, and draw $250,000$ samples from the prior. We show the results in \cref{fig:example_three}. In the top left panel, we display a sub-sample from the truncated Gaussian prior on $\Lambda$. In the upper right, we show the strength and $R$ values for concrete newer than 25 days. They greyed-out region $R\in[0.2,0.4]$ is the actual subset of data we use in the SCP algorithm. We then draw 30 points from the prior samples, and plot their corresponding power law $y=aR^b$ for all $R$. It is evident that our choice of $\Lambda$ and prior are generous in permitting extreme power laws. For example, it is unlikely that the strength shows \textit{no} change between a mix that is 20\% water relative to cement versus one that is equal water and equal cement, as some of the power laws imply.

In the bottom left panel, we show the SCP solution derived from this prior and data subset. For the bottom right panel, we show the same scatterplot except now we show 30 power laws sampled from the SCP solution. The band of power laws from the prior is compressed into a narrower one better fitting the data, aside from a few outliers (corresponding to outlier draws from the SCP solution).

The right column also motivates the use of $R$ closer to $0$ than $1$. All the power laws flatten out as $R\to 1$, and it is much harder to distinguish sensible parameters $(a,b)$ from the strength data in that region. By contrast, for $R\approx 0.3$, different values of $(a,b)$ correspond to dramatically different strength values. Indeed, fitting the SCP algorithm with $R\in[0.6,0.8]$, for example, gives numerically unstable and physically unrealistic results.

\begin{figure}[t]
    \centering
    \includegraphics[width=0.9\textwidth]{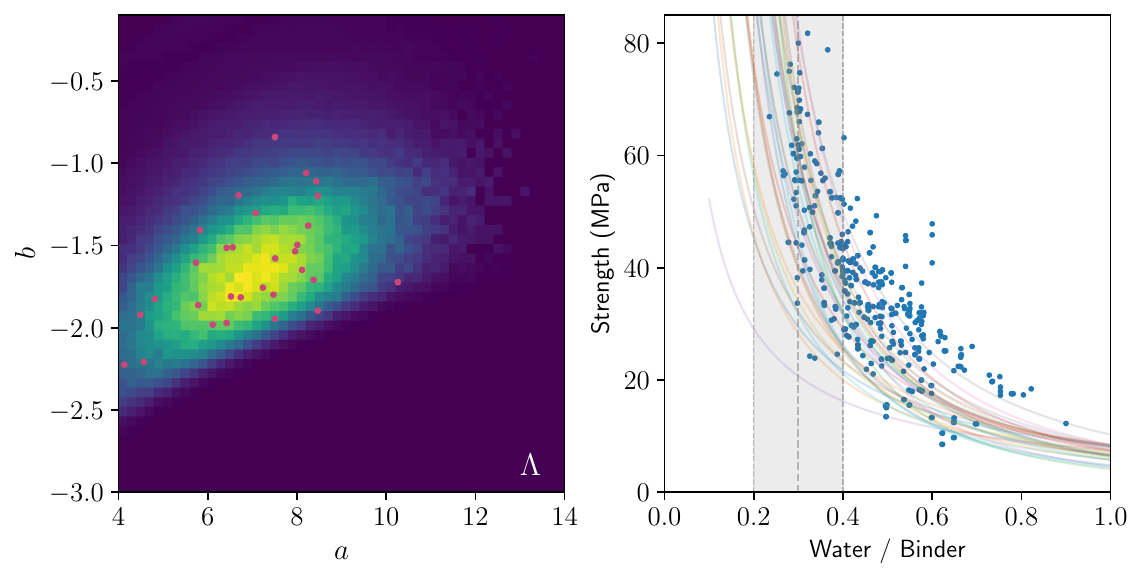} % Adjust width as needed
    \caption{We display the equivalent of the bottom left and bottom right panels of \cref{fig:example_three}, except now we use concrete data that is aged between 25 and 50 days. We plot those strength and water to binder ratio $R$ on the right. The vertical segment $R\in[0.2,0.4]$ depicts the data used in the SCP estimation, whose solution is plotted on the left. We then pick 30 points from this SCP solution (shown as points on the left plot) and graph the corresponding power laws on the right.}
    \label{fig:example_three_part_two}
\end{figure}

To relate these results to the continuity theorems, we repeat this experiment with all else held equal other than the age of the concrete, now set to $[25, 50]$ days, yielding 141 observations before bootstrapping. We expect to see a shift in the $(a,b)$ distribution, albeit not a dramatic one; for example, we expect a power law with negative slope. This prior expectation is also informed by the age-dependent power law formulations of \cite{YEH20061865}. Indeed, \cref{fig:example_three_part_two} shows a tightening in the variance along the $b$-axis but otherwise a similar shape. The band of power laws sampled from the estimate are again much tighter than those from the prior (shown in the top right panel of \cref{fig:example_three}). We remark that the fit seems worse when extrapolated to larger $R$, which is likely a consequence of treating $R$ as fixed rather than solving a 3-dimensional SCP.

\section{Discussion}
\label{section:discussion}

We have established essential theoretical properties of the SCP algorithm developed in \cite{esip_2025}, specifically, continuity of the solution operator over a broad space of probability measures. The disintegration techniques of the SCP have already been used in numerous scientific applications, ranging from coastal hydrodynamics to the contamination of groundwater acquifers \cite{manning_2015, groundwater_2015}. The stability results ensure those applications are robust to small shifts in the trial-generating distribution, that could be caused by changing experimental conditions or typical temporal variation. The continuity results also let us analyze sequences of trial-generating distributions, useful if, say, the experimental conditions become increasingly refined over time. This also suggests connections to fixed point estimation, where the sequence converges to a point mass, as we elaborate below.

Next,  while we have investigated continuity in the TGD, future work may also quantify continuity in the map $Q$ itself. For example, the differential equations  defining a physical model are often just an approximation; thus, we may ask how the approximation error in $Q$ leads to variation in the SCP solution. Alternatively the form $Q$ could vary in time even while the TGD remains the same.

On a more technical level, it remains unclear whether a broader class of limiting TGDs are permitted in our weak convergence results, as well as more that admit explicit constructions. For example, we imposed restrictions on the null set of the surface integral $\tilde \rho_{p,\cD}$ that may in principle be hard to check. Moreover, the Dirac measure scenarios had prior-specific restrictions in order to permit an explicit representation; it remains unclear whether this restriction can be dropped altogether or made less sensitive to the prior.

Perhaps the most interesting future work from this analysis is applying the SCP to the classic Bayesian point estimation problem; that is, there is a fixed, unknown parameter on which we place a prior that captures our uncertainty. In particular, we note that the generation of random variables can be achieved using the well-known inverse transform technique. If we take $Q$ to be the inverse CDF for a particular $\lambda\in\Lambda$ and $u\in[0,1]$ drawn uniformly randomly, then solving the SCP resembles the well-known empirical Bayes set-up of \cite{efron_2014}. Taking the distribution over $\Lambda$ to a point mass, this turns into a fixed point parameter estimation problem.  However, this theory requires modification since the measures on $\Lambda\times[0,1]$ we estimate should preserve the uniform the marginal over $[0,1]$. Therefore, a constrained SCP analogue is needed, as is developed in \cite{csip_2025}. A full development of this connection is ongoing.

\appendix

\section{Proofs for section \ref{section:total:variation}}

    \begin{proof}[Proof of \cref{theorem:stability}]
    In the following calculations, we  substitute \cref{eSIP:density:solution}, perform a change of measure using $\rho_p  = \frac{dP_p}{d\mu_{\Lambda}}$, apply the disintegration theorem (see, e.g., \cite[Theorem 2.4]{esip_2025})  to the prior measure $P_p$, observe $Q(\lambda)=q$ for $\lambda\in Q^{-1}(q)$, note the inner integral integrates to 1 on $Q^{-1}(q)\cap\Lambda$ since $P_p(\,\cdot\, | q)$ is regular conditional measure, perform another change of measure using $\tilde\rho_{p,\cD}=\frac{d QP_p}{d\mu_{\cD}}$, and finally recognize the total variation distance of $p_{\cD}$ and $r_{\cD}$. This gives
         \begin{align*}
            2\cdot d_{\TV}(\hat p_{\Lambda},\hat r_{\Lambda})  &= \int_{\Lambda} |\hat p_{\Lambda}(\lambda)-\hat r_{\Lambda} (\lambda)|d\mu_{\Lambda}(\lambda)\\
            &=\int_{\Lambda} \frac{\rho_p(\lambda)}{\tilde\rho_{p,\cD}(Q(\lambda))}|p_{\cD}(Q(\lambda))-r_{\cD}(Q(\lambda))|d\mu_{\Lambda}(\lambda) \\
            &=\int_{\Lambda} \frac{1}{\tilde\rho_{p,\cD}(Q(\lambda))}|p_{\cD}(Q(\lambda))-r_{\cD}(Q(\lambda))|dP_{p}(\lambda) \\
            &= \int_{\cD} \left[\int_{Q^{-1}(q)\cap\Lambda} \frac{1}{\tilde\rho_{p,\cD}(Q(\lambda))}|p_{\cD}(Q(\lambda))-r_{\cD}(Q(\lambda))| dP_{p}(\lambda| q)(\lambda) \right]d QP_p(q) \\
            &=  \int_{\cD} \frac{|p_{\cD}(q)-r_{\cD}(q)|}{\tilde\rho_{p,\cD}(q)}\left[\int_{Q^{-1}(q)\cap\Lambda}  dP_{p}(\lambda| q) \right]d QP_p(q) \\
            &= \int_{\cD} \frac{|p_{\cD}(q)-r_{\cD}(q)|}{\tilde\rho_{p,\cD}(q)}\; d QP_p(q) = \int_{\cD} |p_{\cD}(q)-r_{\cD}(q)| d\mu_{\cD}(q) \\
            &= 2\cdot d_{\TV}(p_{\cD}, r_{\cD}).
        \end{align*} By a data-processing inequality \cite[Theorem 5.2]{devroye2001combinatorial}, \[d_{\TV}(p_{\cD},r_{\cD})  =d_{\TV}\left( QP_{\Lambda}^t,QR_{\Lambda}^t\right)\le d_{\TV}\left( P_{\Lambda}^t,R_{\Lambda}^t\right).\] 
    \end{proof}

\section{Proofs for  section \ref{section:weak:continuity}}

We first prove the claimed characterization of $\overline{\mathscr{P}_{\Lambda,Q}}$ as containing countable mixtures of Dirac measures and elements of $\mathscr{P}_{\Lambda,Q}$. We start with the simple Dirac measure case by adapting \cite[Example 8.1.4]{Bogachev2007}.

\begin{lemma} \label{lemma:existence:converging:to:dirac} Pick $\lambda^{\ast}\in\Lambda$. Then $\delta_{\lambda^{\ast}}\in \overline{\mathscr{P}_{\Lambda,Q}}$.
\end{lemma}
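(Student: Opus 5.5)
We construct an explicit sequence of absolutely continuous measures concentrating at $\lambda^{\ast}$, following \cite[Example 8.1.4]{Bogachev2007}. The paper already notes that $P\ll\mu_{\Lambda}$ implies $P\in\mathscr{P}_{\Lambda,Q}$, because $QP\ll Q\mu_{\Lambda}\ll\mu_{\cD}$ by \cite[Theorem E.2]{esip_2025}. It therefore suffices to find probability measures $\nu_i\ll\mu_{\Lambda}$ on $\Lambda$ with $\nu_i\stackrel{w}{\to}\delta_{\lambda^{\ast}}$.

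For each $i\in\NN$, let $B_i=B(\lambda^{\ast},1/i)\cap\mathrm{int}(\Lambda)$. We first check that $\mu_{\Lambda}(B_i)>0$. By Assumption \ref{assumption:compact}, the closure of $\mathrm{int}(\Lambda)$ is $\Lambda$, so $\lambda^{\ast}$ is a limit of interior points. Hence the open ball $B(\lambda^{\ast},1/i)$ meets the open set $\mathrm{int}(\Lambda)$. Their intersection $B_i$ is then a nonempty open set and has positive Lebesgue measure. This is the one place where the structure of $\Lambda$ matters; without Assumption \ref{assumption:compact}, $\lambda^{\ast}$ could be an isolated or lower-dimensional point of $\Lambda$ and the construction would fail.

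Define $\nu_i$ by its density $p_i=\chi_{B_i}/\mu_{\Lambda}(B_i)$ with respect to $\mu_{\Lambda}$. Each $\nu_i$ is a Borel probability measure on $\Lambda$ with $\nu_i\ll\mu_{\Lambda}$, so $\nu_i\in\mathscr{P}_{\Lambda,Q}$.

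For weak convergence, fix a bounded continuous $f$ on $\Lambda$. Then
\[
\left|\int f\,d\nu_i-f(\lambda^{\ast})\right|\le \frac{1}{\mu_{\Lambda}(B_i)}\int_{B_i}|f(\lambda)-f(\lambda^{\ast})|\,d\mu_{\Lambda}(\lambda)\le \sup_{\lambda\in\Lambda,\ |\lambda-\lambda^{\ast}|<1/i}|f(\lambda)-f(\lambda^{\ast})|.
\]
The right-hand side tends to $0$ by continuity of $f$ at $\lambda^{\ast}$. Hence $\nu_i\stackrel{w}{\to}\delta_{\lambda^{\ast}}$, and so $\delta_{\lambda^{\ast}}\in\overline{\mathscr{P}_{\Lambda,Q}}$.

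As an alternative argument for weak convergence, one can observe that $\nu_i(B(\lambda^{\ast},\epsilon)^c)=0$ for all $i>1/\epsilon$. Weak convergence then follows from the portmanteau theorem.
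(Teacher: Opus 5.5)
Your proof is correct. It shares the paper's overall strategy: build measures $\nu_i\ll\mu_{\Lambda}$ converging weakly to $\delta_{\lambda^{\ast}}$, then use the remark that absolute continuity gives membership in $\mathscr{P}_{\Lambda,Q}$. The difference is the approximating family.

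The paper dilates a fixed density $\rho$ supported in $\Lambda$, taking $n\rho(n(\lambda-\lambda^{\ast}))$, and concludes by the substitution $t=n(\lambda-\lambda^{\ast})$. This standard approximate-identity argument gives smooth approximants when $\rho$ is smooth, but as written it needs two repairs:
\begin{itemize}
\item Since $\Lambda\subset\RR^n$, the dilation with index $k$ must carry the factor $k^n$, not $k$; the paper's step $d\lambda=dt/n$ is only valid in dimension one.
\item The support $\lambda^{\ast}+\frac{1}{k}\Lambda$ of the dilated density need not lie in $\Lambda$. For $\Lambda=[1,2]$ and $\lambda^{\ast}=2$ it is $[2+1/k,2+2/k]$, which is disjoint from $\Lambda$. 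Then the approximants are not probability measures on $\Lambda$, and $f(\lambda^{\ast}+t/k)$ need not even be defined.
\end{itemize}

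Your normalized indicators of $B(\lambda^{\ast},1/i)\cap\mathrm{int}(\Lambda)$ avoid both problems by construction, including when $\lambda^{\ast}\in\partial\Lambda$. You also correctly identify that the only real input is $\mu_{\Lambda}(B_i)>0$. That is exactly where the hypothesis $\overline{\mathrm{int}(\Lambda)}=\Lambda$ of Assumption~\ref{assumption:compact} enters, and the paper's proof never invokes it. Your bound by the oscillation of $f$ on $\Lambda\cap B(\lambda^{\ast},1/i)$ then closes the argument, in place of the paper's change of variables. Your version is the more robust of the two.
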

    \begin{proof}
        Let $\rho$ be any density on $\Lambda$ with respect to $\mu_{\Lambda}$, and assume $\rho$ is 0 outside of $\Lambda$.  Define $\rho_n(\lambda) = n\cdot \rho(n(\lambda-\lambda^{\ast}))$ and let $\nu_n$ be the associated probability measure on $\Lambda$. As argued earlier, each $\nu_n\in \mathscr{P}_{\Lambda,Q}$ since it admits a density. Pick any bounded, continuous function $f$ on $\Lambda$. Then \begin{align*}
            \int f(\lambda)d\nu_n(\lambda) &= \int f(\lambda) \rho_n(\lambda) d\lambda = \int f(\lambda)\cdot n\rho(n(\lambda-\lambda^{\ast})) d\lambda \\
            &= \int f(\lambda^{\ast}+t/n) \rho(t) dt \to f(\lambda^{\ast}) = \int f(\lambda) d\delta_{\lambda^{\ast}}.
        \end{align*} Hence $\nu_n\stackrel{w}{\to}\delta_{\lambda^{\ast}}$, proving $\delta_{\lambda^{\ast}}\in \overline{\mathscr{P}_{\Lambda,Q}}$.
    \end{proof}

\begin{theorem} \label{theorem:characterize:weak:sequential:closure} Let $\cK\subseteq\NN$, $\{\lambda_i\}_{i\in\cK}\subset\Lambda$, and $\{p_i\}_{i\in\cK}$ such that each $p_i\ge 0$ and $\sum_{i\in\cK} p_i=1$. Pick $\alpha\in[0.1]$. Set $\nu_1\in\mathscr{P}_{\Lambda,Q}$ and $\nu_2=\sum_{i\in\cK} p_i\delta_{\lambda_i}$ and define $\nu=\alpha\nu_1+(1-\alpha)\nu_2$. Then $\nu\in \overline{\mathscr{P}_{\Lambda,Q}}$.
\end{theorem}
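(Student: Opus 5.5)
The plan is to build an explicit sequence in $\mathscr{P}_{\Lambda,Q}$ converging weakly to $\nu$, by smoothing each Dirac atom as in \cref{lemma:existence:converging:to:dirac} and leaving $\nu_1$ untouched. For each $i\in\cK$, \cref{lemma:existence:converging:to:dirac} supplies a sequence $\{\nu^{(i)}_n\}_{n\in\NN}\subset\mathscr{P}_{\Lambda,Q}$ with densities with respect to $\mu_\Lambda$ and $\nu^{(i)}_n\stackrel{w}{\to}\delta_{\lambda_i}$. One caveat: the construction $n\rho(n(\lambda-\lambda^{\ast}))$ must really be a probability density supported in $\Lambda$. In $\RR^n$ the scaling should be $n^{\dim}\rho(n(\cdot-\lambda^\ast))$. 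Also, when $\lambda^\ast\in\partial\Lambda$, part of the mass may leave $\Lambda$.

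I would fix this as follows. Since the closure of $\mathrm{int}(\Lambda)$ is $\Lambda$ (Assumption \ref{assumption:compact}), choose points $x_n\in\mathrm{int}(\Lambda)$ with $|x_n-\lambda^\ast|<1/n$ and radii $r_n<1/n$ such that the ball $B(x_n,r_n)\subset\Lambda$. Then take $\nu_n^{(i)}$ to be the normalized Lebesgue measure on that ball. For every bounded continuous $f$, $\int f\,d\nu_n^{(i)}\to f(\lambda_i)$ by continuity of $f$ at $\lambda_i$. Each such measure has a density, so it lies in $\mathscr{P}_{\Lambda,Q}$ by the remark following Assumption \ref{assumption:pushforward}.

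Next, define
\[
\eta_n=\alpha\nu_1+(1-\alpha)\sum_{i\in\cK}p_i\nu^{(i)}_n .
\]
I would check the following.
\begin{enumerate}[label=(\roman*)]
\item $\eta_n$ is a probability measure. The series converges setwise because the weights are summable and each term is at most $p_i$.
\item $\eta_n\in\mathscr{P}_{\Lambda,Q}$. Note $Q\eta_n=\alpha Q\nu_1+(1-\alpha)\sum_i p_iQ\nu^{(i)}_n$. If $\mu_\cD(A)=0$, every term vanishes on $A$, by countable additivity and absolute continuity of each summand. Hence $Q\eta_n\ll\mu_\cD$.
\end{enumerate}

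Finally, for bounded continuous $f$ I would write
\[
\Bigl|\int f\,d\eta_n-\int f\,d\nu\Bigr|\le(1-\alpha)\sum_{i\in\cK}p_i\Bigl|\int f\,d\nu_n^{(i)}-f(\lambda_i)\Bigr|.
\]
Each summand tends to $0$ and is dominated by $2\|f\|_\infty p_i$, which is summable. So the sum tends to $0$ by dominated convergence on the counting measure over $\cK$, with a tail-truncation argument if $\cK$ is infinite. This gives $\eta_n\stackrel{w}{\to}\nu$, so $\nu\in\overline{\mathscr{P}_{\Lambda,Q}}$.

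The main obstacle is the uniform handling of countably many atoms. Dominated convergence over $\cK$ resolves it, because a single index $n$ is used simultaneously for all atoms. The boundary issue is resolved by the interior-ball construction. The degenerate cases $\alpha=0$ and $\alpha=1$ are covered by the same argument.
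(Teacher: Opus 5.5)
Your proposal is correct and follows the same route as the paper: approximate each atom $\delta_{\lambda_i}$ by measures with $\mu_{\Lambda}$-densities, form the mixture $\eta_n=\alpha\nu_1+(1-\alpha)\sum_{i\in\cK}p_i\nu_n^{(i)}$, and check convergence against bounded continuous $f$. Several of your details fill in steps the paper's proof passes over:
\begin{enumerate}
\item Your interior-ball construction avoids two defects of \cref{lemma:existence:converging:to:dirac}: the normalization $n\rho(n(\cdot-\lambda^{\ast}))$ is not a density in dimension above one, and mass can leave $\Lambda$ at boundary points.
\item You verify $Q\eta_n\ll\mu_{\cD}$ directly, which is needed because $\nu_1$ need not have a $\mu_{\Lambda}$-density.
\item You use dominated convergence over $\cK$ to justify exchanging the limit with the countable sum.
\end{enumerate}
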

    \begin{proof} For each $i\in\cK$, let $\{\eta_{j}^{(i)}\}_{j\in\NN}\subset \mathscr{P}_{\Lambda,Q}$ such that $\lim_j \eta_{j}^{(i)}=\delta_{\lambda_i}$ in the weak sense; these sequences exist and have a density by the proof of \cref{lemma:existence:converging:to:dirac}. Define the sequence of mixtures $\{\eta_j\}_{j\in\NN}$ by $\eta_j = \alpha\nu_1+(1-\alpha)\sum_{i\in\cK}p_i\eta_j^{(i)}$. Each $\eta_j\in\mathscr{P}_{\Lambda,Q}$ because it is a convex combination of measures with a density, hence has a density itself. Pick a bounded, continuous $f$ on $\Lambda$. Then \begin{align*}
        \int f(\lambda) d\eta_j(\lambda) &= \alpha \int f(\lambda) d\nu_1(\lambda) + (1-\alpha)\sum_{i\in\cK} p_i\int f(\lambda) d\eta_j^{(i)}(\lambda) \\
        &\to \alpha \int f(\lambda) d\nu_1(\lambda) + (1-\alpha)\sum_{i\in\cK} p_i\int f(\lambda) d\delta_{\lambda_i}(\lambda) \\
        &= \alpha \int f(\lambda) d\nu_1(\lambda) + (1-\alpha) \int f(\lambda) d\nu_2(\lambda) = \int f(\lambda) d\nu(\lambda).
    \end{align*} Hence $\eta_j \stackrel{w}{\to}\nu$, proving $\nu\in \overline{\mathscr{P}_{\Lambda,Q}}$.        
    \end{proof}

We proceed to the main proofs of weak convergence of the SCP solution operator. The first theorem is the general existence result, and we follow it up with three constructive results.

    \begin{proof}[Proof of \cref{riesz:theorem}]
        Let $\rho_{\cD,i}=\frac{d(Q P_{\Lambda}^{t,i})}{d\mu_{\cD}}$, which exists by definition of $\mathscr{P}_{\Lambda,Q}$.  Let $\tilde\rho_{p,\cD} =
\frac{d (Q P_p)}{d\mu_{\cD}}$ and  $
\tilde\rho_{\cD} = \frac{d(Q\mu_{\Lambda})}{d\mu_{\cD}}$ be the surface integrals defined in \cite[Theorem 3.5 and Theorem E.2]{esip_2025}. Since the latter theorem establishes that $\mu_{\cD}(\{\tilde\rho_{\cD}=0\})=0$, $Q\mu_{\Lambda}$ and $\mu_{\cD}$ are absolutely continuous with respect to each other. Hence,  \begin{equation} \label{eq:radon:nikodym:chain}
    \frac{\rho_{\cD,i}}{\tilde\rho_{\cD}}=\frac{d (QP_{\Lambda}^{t,i})}{d\mu_{\cD}}\cdot \frac{d\mu_{\cD}}{d (Q\mu_{\Lambda})} = \frac{d (QP_{\Lambda}^{t,i})}{d (Q\mu_{\Lambda})}.
\end{equation} Let $\hat p_{\Lambda}^i$ be the density of the SCP solution for $P_{\Lambda}^{t,i}$, which from \cref{eSIP:density:solution} is \[\hat p_{\Lambda}^i(\lambda)=\frac{d\hat P_{\Lambda}^i}{d\mu_{\Lambda}}(\lambda) =\frac{\rho_{\cD,i}(Q(\lambda))\rho_p(\lambda)}{\tilde\rho_{p,\cD}(Q(\lambda))}.\] Define for any bounded, continuous function $f$ on $\Lambda$ the function $g_f$ on $\cD$ by \[g_f(q)= \int_{Q^{-1}(q)} \frac{ f\cdot \rho_p}{\sqrt{\det J_{Q}J_{Q}^T}} ds.\]

 Observe $QP_{\Lambda}^{t,i}\stackrel{w}{\to}QP_{\Lambda}^{\ast}$ by   \cite[Theorem 29.2]{billingsley1995probability}. Pick any bounded, continuous $f$ on $\Lambda$.   We apply \cref{eSIP:density:solution}, disintegrate $\mu_{\Lambda}$, apply \cref{lemma:inner:integral:equality}, and use \cref{eq:radon:nikodym:chain}. This gives: \begin{align}
        \int_{\Lambda} f(\lambda)d\hat P_{\Lambda}^i(\lambda) &= \int_{\Lambda}\frac{f(\lambda)\rho_{\cD,i}(Q(\lambda))\rho_p(\lambda)}{\tilde\rho_{p,\cD}(Q(\lambda))} d\mu_{\Lambda}(\lambda)\notag  \\
        &= \int_{\cD} \frac{\rho_{\cD,i}(q)}{\tilde\rho_{p,\cD}(q)}\left[ \int_{Q^{-1}(q)} f(\lambda)\rho_p(\lambda) d\mu_N(\lambda| q)\right] d (Q\mu_{\Lambda})(q)\notag  \\
        &=\int_{\cD} \frac{\rho_{\cD,i}(q)}{\tilde\rho_{p,\cD}(q)\tilde \rho_{\cD}(q)} \left[ \int_{Q^{-1}(q)} \frac{ f\cdot \rho_p}{\sqrt{\det J_{Q}J_{Q}^T}} ds\right] d (Q\mu_{\Lambda})(q) \notag  \\
        &= \int_{\cD}\frac{1}{\tilde \rho_{p,\cD}(q)} \left[ \int_{Q^{-1}(q)} \frac{ f\cdot \rho_p}{\sqrt{\det J_{Q}J_{Q}^T}} ds\right] d (Q P_{\Lambda}^{t,i})(q) \notag \\
        &= \int_{\cD} \frac{g_f(q)}{\tilde \rho_{p,\cD}(q)}d (Q P_{\Lambda}^{t,i})(q). \notag
            \end{align} 
            
    We argue that
\begin{align*}
     \int_{\cD} \frac{g_f(q)}{\tilde \rho_{p,\cD}(q)}d (Q P_{\Lambda}^{t,i})(q) \to \int_{\cD} \frac{g_f(q)}{\tilde \rho_{p,\cD}(q)}d (Q P_{\Lambda}^{t,\ast})(q). 
    \end{align*} Both $g_f$ and $\tilde\rho_{p,\cD}$ are bounded and continuous on $\cD$ by  \cite[Theorem 3.8]{esip_2025}. Pick $\varepsilon>0$. Define  the sequence $\{h_k\}_{k\in\NN}$ of functions on $\cD$ by \[h_k(q)=\frac{g_f(q)}{\tilde\rho_{p,\cD}(q)}\cdot \chi_{\{\tilde\rho_{p,\cD}>1/k\}}(q).\] Observe \begin{align}
        \MoveEqLeft\left|\int_{\cD} \frac{g_f}{\tilde \rho_{p,\cD}}d (Q P_{\Lambda}^{t,i}) -   \int_{\cD} \frac{g_f(q)}{\tilde \rho_{p,\cD}}d (Q P_{\Lambda}^{t,\ast})\right| \notag \\ &\le \left|\int_{\cD} h_{k}\; d(QP_{\Lambda}^{t,\ast})-\int_{\cD} \frac{g_f}{\tilde \rho_{p,\cD}}d (Q P_{\Lambda}^{t,\ast})\right| +  \left|\int_{\cD} \frac{g_f}{\tilde \rho_{p,\cD}}d (Q P_{\Lambda}^{t,i})  - \int_{\cD} h_{k} \; d(QP_{\Lambda}^{t,i})\right| \notag \\
        &\qquad\qquad +\left|\int_{\cD} h_{k} \;d(QP_{\Lambda}^{t,i})-\int_{\cD} h_{k} \;d(QP_{\Lambda}^{t,\ast})\right|. \label{eq:triple:triangle:inequality}
    \end{align} We now choose $k$ such that for all sufficiently large $i$, each of these three terms is smaller than  $\varepsilon/3$.

    Choose $C>0$ such that $|f|\le C$ on $\Lambda$. Since $QP_{\Lambda}^{t,\ast}(\{\tilde\rho_{p,\cD}=0\})=0$,  integrals over the set $\{0<\tilde\rho_{p,\cD}\le 1/k\}$ are equal to those over $\{0\le \tilde\rho_{p,\cD}\le 1/k\}$ with respect to $QP_{\Lambda}^{t,\ast}$. Thus, \begin{align*}
    \left|\int_{\cD} h_k \;d(QP_{\Lambda}^{t,\ast}) -\int_{\cD} \frac{g_f}{\tilde \rho_{p,\cD}}d (Q P_{\Lambda}^{t,\ast})\right| 
    &= \left|\int_{\{0<\tilde\rho_{p,\cD}\le    1/k\}}\frac{g_f}{\tilde \rho_{p,\cD}}d (Q P_{\Lambda}^{t,\ast})\right| \\
        &\le \left|\int_{\{0<\tilde\rho_{p,\cD}\le    1/k\}}\frac{\int_{Q^{-1}(q)} \frac{f\rho_p}{\sqrt{\det J_Q J_Q^T}}  ds}{\int_{Q^{-1}(q)} \frac{\rho_p}{\sqrt{\det J_Q J_Q^T}}  ds}d (Q P_{\Lambda}^{t,\ast})\right| \\
    &\le \left|\int_{\{0<\tilde\rho_{p,\cD}\le    1/k\}}\frac{C\cdot\int_{Q^{-1}(q)} \frac{\rho_p}{\sqrt{\det J_Q J_Q^T}}  ds}{\int_{Q^{-1}(q)} \frac{\rho_p}{\sqrt{\det J_Q J_Q^T}}  ds}d (Q P_{\Lambda}^{t,\ast})\right|  \\
    &\le  C\cdot QP_{\Lambda}^{t,\ast}(\{0\le \tilde\rho_{p,\cD}\le    1/k\}) \\
    &\downarrow  \lim_{k\to\infty} C\cdot QP_{\Lambda}^{t,\ast}(\{\tilde\rho_{p,\cD}=0\}) = 0.
\end{align*} Hence for some $M_1\in\NN$, if $k\ge M_1$, then \begin{equation} \label{eq:h:g:ast:bound}
     \left|\int_{\cD} h_k \;d(QP_{\Lambda}^{t,\ast}) -\int_{\cD} \frac{g_f}{\tilde \rho_{p,\cD}}d (Q P_{\Lambda}^{t,\ast})\right|\le \varepsilon/3.
\end{equation}
    
We perform a similar argument with $QP_{\Lambda}^{t,i}$. Recall $QP_{\Lambda}^{t,i}(\{\tilde\rho_{p,\cD}=0\})=0$ for each $i$ since $QP_{\Lambda}^{t,i}\ll \mu_{\cD}$. This zero set can thus be discarded from integrals in $QP_{\Lambda}^{t,i}$ as before. Hence  \begin{align*}
        \MoveEqLeft \limsup_i \left|\int_{\cD} h_k \;d(QP_{\Lambda}^{t,i}) -\int_{\cD} \frac{g_f}{\tilde \rho_{p,\cD}}d (Q P_{\Lambda}^{t,i})\right|\\ &= \limsup_i\left|\int_{\{0<\tilde\rho_{p,\cD}\le    1/k\}}\frac{g_f}{\tilde \rho_{p,\cD}}d (Q P_{\Lambda}^{t,i})\right| \\
        &\le \limsup_i\left|\int_{\{0<\tilde\rho_{p,\cD}\le    1/k\}}\frac{\int_{Q^{-1}(q)} \frac{f\rho_p}{\sqrt{\det J_Q J_Q^T}}  ds}{\int_{Q^{-1}(q)} \frac{\rho_p}{\sqrt{\det J_Q J_Q^T}}  ds}d (Q P_{\Lambda}^{t,i})\right| \\
    &\le \limsup_i\left|\int_{\{0<\tilde\rho_{p,\cD}\le    1/k\}}\frac{C\cdot\int_{Q^{-1}(q)} \frac{\rho_p}{\sqrt{\det J_Q J_Q^T}}  ds}{\int_{Q^{-1}(q)} \frac{\rho_p}{\sqrt{\det J_Q J_Q^T}}  ds}d (Q P_{\Lambda}^{t,i})\right|  \\
    &\le C\cdot \limsup_i QP_{\Lambda}^{t,i}(\{0\le \tilde\rho_{p,\cD}\le    1/k\})\le C\cdot QP_{\Lambda}^{t,\ast}(\{0\le \tilde\rho_{p,\cD}\le    1/k\}) \\ &\downarrow \lim_{k\to\infty} C\cdot QP_{\Lambda}^{t,\ast}(\{\tilde\rho_{p,\cD}= 0\})= 0.
\end{align*} The first five lines follow by the same reasoning as the $QP_{\Lambda}^{t,\ast}$ case. In the sixth, we apply \cite[Theorem 13.16(iv)]{klenke2020} to the closed (by continuity of $\tilde\rho_{p,\cD}$)  set $\{0\le\tilde\rho_{p,\cD}\le k\}$ along with $QP_{\Lambda}^{t,i}\stackrel{w}{\to} QP_{\Lambda}^{t,\ast}$. Then we use  Assumption \ref{assumption:surface:integral}. Hence, there exists $M_2\in\NN$ such that $k\ge M_2$ implies \begin{equation*} 
    \limsup_i \left|\int_{\cD} h_k\; d(QP_{\Lambda}^{t,i}) -\int_{\cD} \frac{g_f}{\tilde \rho_{p,\cD}}d (Q P_{\Lambda}^{t,i})\right| \le \varepsilon/6.
\end{equation*}

By Assumption \ref{assumption:surface:integral}, there exists some $k^{\ast}\ge\max(M_1,M_2)$ such that $QP_{\Lambda}^{t,\ast}(\{\tilde\rho_{p,\cD}=1/k^{\ast}\})=0$. For such $k^{\ast}$, we have \begin{equation*}  
    \limsup_i \left|\int_{\cD} h_{k^{\ast}}\; d(QP_{\Lambda}^{t,i}) -\int_{\cD} \frac{g_f}{\tilde \rho_{p,\cD}}d (Q P_{\Lambda}^{t,i})\right| \le \varepsilon/6.
\end{equation*}

 By definition of a limsup, there exists $M_3\in\NN$ such that if $i\ge M_3$, then  \begin{equation} \label{eq:h:g:i:bound}
     \left|\int_{\cD} h_{k^{\ast}}\; d(QP_{\Lambda}^{t,i}) -\int_{\cD} \frac{g_f}{\tilde \rho_{p,\cD}}d (Q P_{\Lambda}^{t,i})\right| \le\varepsilon/3.
 \end{equation}

   Then, observe that $h_{k^{\ast}}$ has  discontinuity set   $\{\tilde\rho_{p,\cD}=1/k^{\ast}\}$.  Thus, $h_{k^{\ast}}$ is continuous  $QP_{\Lambda}^{t,\ast}$-a.e.  from Assumption \ref{assumption:surface:integral} and is  bounded. Then, \cite[Theorem 13.16(iii)]{klenke2020} along with $QP_{\Lambda}^{t,i}\stackrel{w}{\to} QP_{\Lambda}^{t,\ast}$ implies $\lim_{i\to\infty}\int_{\cD} h_{k^{\ast}} \;d(QP_{\Lambda}^{t,i})= \int_{\cD} h_{k^{\ast}} \;d(QP_{\Lambda}^{t,\ast})$. So there exists $M_4\in\NN$ such that $i\ge M_4$ implies \begin{equation} \label{eq:h:h:bound}
       \left|\int_{\cD} h_{k^{\ast}} \; d(QP_{\Lambda}^{t,i})-\int_{\cD} h_{k^{\ast}}\; d(QP_{\Lambda}^{t,\ast})\right|< \varepsilon/3.
   \end{equation}
    
    We combine the bounds and return to \cref{eq:triple:triangle:inequality}. Suppose $i\ge \max(M_3,M_4)$ and consider the choice of $k^{\ast}\ge\max(M_1,M_2)$ with $QP_{\Lambda}^{t,\ast}(\{\tilde\rho_{p,\cD}=1/k^{\ast}\})=0$. Using \cref{eq:h:g:ast:bound,eq:h:g:i:bound,eq:h:h:bound}, \begin{align*}
        \MoveEqLeft\left|\int_{\cD} \frac{g_f}{\tilde \rho_{p,\cD}}d (Q P_{\Lambda}^{t,i}) -   \int_{\cD} \frac{g_f(q)}{\tilde \rho_{p,\cD}}d (Q P_{\Lambda}^{t,\ast})\right| \le \varepsilon.
    \end{align*} Thus \begin{equation} \label{eq:weak:continuity:intermediate}
        \int_{\Lambda} f(\lambda)d\hat P_{\Lambda}^i(\lambda)=\int_{\cD} \frac{g_f(q)}{\tilde \rho_{p,\cD}(q)}d (Q P_{\Lambda}^{t,i})(q)\to \int_{\cD} \frac{g_f(q)}{\tilde \rho_{p,\cD}(q)}d (Q P_{\Lambda}^{t,\ast})(q).
    \end{equation}

    Note  that $\Lambda$ is a compact Hausdorff space. Let $\cC_{\Lambda}$ be the set of bounded, continuous functions on $\Lambda$. Define the positive linear functional $\psi \colon \cC_{\Lambda}\to \RR$ by \[\psi(h) = \int_{\cD} \frac{g_h(q)}{\tilde\rho_{p,\cD}(q)} d(Q P_{\Lambda}^{\ast})(q)\] for any $h\in\cC_{\Lambda}$. The operator norm $\|\cdot\|_{\mathrm{op}}$ of $\psi$ satisfies $\|\psi\|_{\mathrm{op}}=\psi(\mathbbm{1})$ according to \cite[Lemma 6.2]{van2003probability}, where $\mathbbm{1}$ is the constant function taking value $1$ on $\Lambda$. Applying \cref{eq:weak:continuity:intermediate} with $f=\mathbbm{1}$, we have \[1=\lim_{i\to\infty}\int_{\Lambda} d\hat P_{\Lambda}^i(\lambda)=\int_{\cD} \frac{g_f(q)}{\tilde \rho_{p,\cD}(q)}d (Q P_{\Lambda}^{t,\ast})(q)=\psi(\mathbbm{1}).\] Hence $\|\psi\|_{\mathrm{op}}=1$. By the Riesz Representation Theorem \cite[Theorem 6.3]{van2003probability} there is a unique Borel probability measure $\hat\nu$ on $\Lambda$ such that \[\psi(h)=\int_{\Lambda} h (\lambda)d\hat\nu(\lambda)\] for all bounded, continuous $h$. In particular, take $h=f$. Thus, we may  write \[\int_{\Lambda} f(\lambda)d\hat P_{\Lambda}^i(\lambda)\to \int_{\cD} \frac{g_f(q)}{\tilde\rho_{p,\cD}(q)}  dQ P_{\Lambda}^{\ast}(q) = \psi(f)=\int_{\Lambda} f (\lambda)d\hat\nu(\lambda).\] Hence, $\hat P_{\Lambda}^i\stackrel{w}{\to}\hat\nu$. 
    \end{proof}

\begin{proof}[Proof of \cref{theorem:weak:convergence:nice:case}] Suppose $P_{\Lambda}^{t,i}\stackrel{w}{\to} P_{\Lambda}^{t,\ast}$. The first part of Assumption \ref{assumption:surface:integral} holds since $P_{\Lambda}^{t,\ast}\in\mathscr{P}_{\Lambda,Q}$ and $\mu_{\cD}(\{\tilde\rho_{p,\cD}=0\})=0$ implies $QP_{\Lambda}^{t,\ast}(\{\tilde\rho_{p,\cD}=0\})=0$. Hence $P_{\Lambda}^{t,\ast}$ satisfies Assumption \ref{assumption:surface:integral}.  Let $f$ be bounded and continuous. Repeating the proof of \cref{riesz:theorem}, we   arrive at \cref{eq:weak:continuity:intermediate}. We show $\int_{\cD} \frac{g_f(q)}{\tilde\rho_{p,\cD}(q)}  d(Q P_{\Lambda}^{t,\ast})(q)=\int_{\Lambda}f(\lambda) d \hat P_{\Lambda}^{\ast}(\lambda)$. Set $\rho_{\cD,\ast}(q) = \frac{d(Q P_{\Lambda}^{t,\ast})}{d\mu_{\cD}}(q)$, which exists since $P_{\Lambda}^{t,\ast}\in\mathscr{P}_{\Lambda,Q}$. By the same reasoning used for \cref{eq:radon:nikodym:chain}, $\frac{\rho_{\cD,\ast}}{\tilde\rho_{\cD}}= \frac{d (QP_{\Lambda}^{t,\ast})}{d (Q\mu_{\Lambda})}.$ Using this fact, \cref{lemma:inner:integral:equality}, the disintegration theorem, and \cref{eSIP:density:solution},  we have \begin{align*}
    \int_{\cD} \frac{g_f(q)}{\tilde\rho_{p,\cD}(q)}  d(Q P_{\Lambda}^{t,\ast})(q) &= \int_{\cD} \frac{1}{\tilde\rho_{p,\cD}(q)} \left[\int_{Q^{-1}(q)} \frac{f\cdot \rho_p}{\sqrt{\det J_QJ_Q^T}}ds\right] d(Q P_{\Lambda}^{t,\ast})(q) \\
    &= \int_{\cD} \frac{\rho_{\cD,\ast}(q)}{\tilde\rho_{p,\cD}(q)} \left[\frac{1}{\tilde \rho_{\cD}(q)} \int_{Q^{-1}(q)} \frac{f\cdot \rho_p}{\sqrt{\det J_QJ_Q^T}}ds\right]  d(Q\mu_{\Lambda})(q) \\
     &= \int_{\cD} \frac{\rho_{\cD,\ast}(q)}{\tilde\rho_{p,\cD}(q)} \left[\int_{Q^{-1}(q)} f(\lambda)\rho_p(\lambda) d\mu_N(\lambda| q)\right] d(Q\mu_{\Lambda})(q)  \\
    &= \int_{\Lambda} \frac{\rho_{\cD,\ast}(Q(\lambda))\rho_p(\lambda)f(\lambda)}{\tilde\rho_{p,\cD}(Q(\lambda))} d\mu_{\Lambda}(\lambda) \\
    &= \int_{\Lambda} f(\lambda) d \hat P_{\Lambda}^{\ast}(\lambda),
\end{align*} Hence by \cref{eq:weak:continuity:intermediate}, $\int_{\Lambda} f(\lambda) d\hat P_{\Lambda}^i(\lambda)\to  \int_{\Lambda} f(\lambda) d \hat P_{\Lambda}^{\ast}(\lambda)$ for all bounded, continuous $f$ as desired.
\end{proof}

    \begin{proof}[Proof of \cref{theorem:weak:convergence:point:mass}]
    Recall the definitions of $g_f$ and $\tilde\rho_{p,\cD}$ from the proof of \cref{riesz:theorem}.  By \cref{lemma:inner:integral:equality},  \begin{equation}
        \frac{g_f(q)}{\tilde \rho_{p,\cD}(q)} = \int_{Q^{-1}(q)} f(\lambda) d P_p(\lambda| q) \label{eq:surface:integral:representation}
    \end{equation} for $QP_p$-a.e.  $q$ (the set of $q$ such that \cref{eq:surface:integral:representation} does not hold has $QP_p$ measure $0$). Now, $P_p\ll \mu_{\Lambda}$, so we have $QP_p\ll Q\mu_{\Lambda}$. Also, \cite[Theorem E.2]{esip_2025} shows $Q\mu_{\Lambda}\ll \mu_{\cD}$. Hence $QP_p\ll \mu_{\cD}$, and since $\mu_{\cD}(\{\tilde\rho_{p,\cD}=0\})=0$, we have $QP_p(\{\tilde\rho_{p,\cD}=0\})=0$. Thus, take \begin{equation} \label{eq:definition:E}
        E= \{q\in\cD:\tilde\rho_{p,\cD}(q)\ne 0 \text{ and }\cref{eq:surface:integral:representation}\text{ holds}\}
    \end{equation} and note $QP_p(E)=1$.

    Pick $\lambda^{\ast}$ from $Q^{-1}(E)$, and note $q^{\ast}$ satisfies \cref{eq:surface:integral:representation}. Observe that $QP_{\Lambda}^{t,\ast}$ satisfies Assumption \ref{assumption:surface:integral}. To see this, note $\tilde\rho_{p,\cD}(q^{\ast})\ne 0$ implies $\delta_{q^{\ast}}(\{\tilde\rho_{p,\cD}=0\})=0$. Then by continuity of $\tilde\rho_{p,\cD}$, for all sufficiently large $k$, $\tilde\rho_{p,\cD}(q^{\ast})\ne 1/k$ which implies $\delta_{q^{\ast}}(\{\tilde\rho_{p,\cD}=1/k\})=0$.  Moreover, $\delta_{\lambda^{\ast}}\in\overline{\mathscr{P}_{\Lambda,Q}}$ as previously discussed. Now suppose $P_{\Lambda}^{t,i}\stackrel{w}{\to}\delta_{\lambda^{\ast}}$. Then we  repeat the argument of \cref{riesz:theorem} leading up to \cref{eq:weak:continuity:intermediate}, followed by \cref{eq:surface:integral:representation}: \begin{align*}
        \int_{\Lambda} f(\lambda) d\hat P_{\Lambda}^i(\lambda) &\to \int_{\cD} \frac{g_f(q)}{\tilde \rho_{p,\cD}(q)}d (Q P_{\Lambda}^{t,\ast})(q) =\frac{g_f(q^{\ast})}{\tilde\rho_{p,\cD}(q^{\ast})} = \int_{Q^{-1}(q^{\ast})} f(\lambda) d P_p(\lambda| q^{\ast}) \\ &=\int_{\Lambda} f(\lambda) d P_p(\lambda| q^{\ast}).
    \end{align*}
    The last equality used the concentration property of regular conditional probability \cite[Definition 1(i)]{chang_pollard}.  Thus, $\hat P_{\Lambda}^i\stackrel{w}{\to}  P_p(\,\cdot\,| q^{\ast})$.
    \end{proof}

    \begin{proof}[Proof of \cref{corollary:weak:convergence:mixture}]
    Note that $\alpha\nu_1+(1-\alpha)\nu_2\in\overline{\mathscr{P}_{\Lambda,Q}}$ by \cref{theorem:characterize:weak:sequential:closure} so the convergence hypothesis is permitted. Let $E$ be defined as in \cref{eq:definition:E}. We now check that Assumption \ref{assumption:surface:integral} is satisfied for $\nu$. Since $Q\nu= \alpha (Q\nu_1)+(1-\alpha)\sum_{i=1}^r \delta_{q_i},$ \begin{align*}
         Q\nu(\{\tilde\rho_{p,\cD}=0\}) &= \alpha (Q\nu_1)(\{\tilde\rho_{p,\cD}=0\}) + (1-\alpha) \sum_{i=1}^r \delta_{q_i}(\{\tilde\rho_{p,\cD}=0\}) = 0.
    \end{align*} This follows since in the proof of \cref{theorem:weak:convergence:nice:case}, $\nu_1$ automatically satisfies the first part of Assumption \ref{assumption:surface:integral}, and the definition of $E$ ensures each $\tilde\rho_{p,\cD}(q_i)\ne 0$. Next, we know by assumption on $\nu_1$ there exists an infinite set $\cK\subseteq\NN$ such that that $(Q\nu_1)(\{\tilde\rho_{p,\cD}=1/k\})=0$. Now for each $i\in\{1,\cdots,r\}$, since $\tilde\rho_{p,\cD}(q_i)\ne 0$, there exists a $k_i$ such that for all $k\ge k_i$, $\tilde\rho_{p,\cD}(q_i)\ne 1/k$. Set $k^{\ast}=\max(k_1,\dots,k_r)$. Then if $k\ge k^{\ast}$, for any $i\in\{1,\cdots,r\}$, we have $\tilde\rho_{p,\cD}(q_i)\ne 1/k$, which implies $\delta_{q_i}(\{\tilde\rho_{p,\cD}=1/k\})=0$. Since the intersection of $\cK$ with $\{k^{\ast},k^{\ast}+1,\cdots\}$ is infinite, we indeed have for infinitely many $k$ that \begin{align*}
         Q\nu(\{\tilde\rho_{p,\cD}=1/k\}) &= \alpha (Q\nu_1)(\{\tilde\rho_{p,\cD}=1/k\}) + (1-\alpha) \sum_{i=1}^r \delta_{q_i}(\{\tilde\rho_{p,\cD}=1/k\}) = 0.
    \end{align*} We repeat the argument to obtain \cref{eq:weak:continuity:intermediate} for this mixture.

    Now, pick any bounded continuous $f$ on $\Lambda$. Then  \begin{align*}
        \int_{\Lambda} f(\lambda) d\hat P_{\Lambda}^i(\lambda) &\to \int_{\cD} \frac{g_f(q)}{\tilde \rho_{p,\cD}(q)}d (Q \nu)(q) \\
        &= \alpha \int_{\cD}\frac{g_f(q)}{\tilde\rho_{p,\cD}(q)} d(Q\nu_1)(q) + (1-\alpha)\sum_{i=1}^r \frac{g_f(q_i)}{\tilde\rho_{p,\cD}(q_i)} \\
        &= \alpha \int_{\Lambda} f(\lambda) d\hat\nu_1(\lambda)+(1-\alpha)\sum_{i=1}^r \omega_i \int_{\Lambda} f(\lambda) dP_p(\lambda| q_i) \\
        &= \int_{\Lambda} f(\lambda) d \hat\nu(\lambda),
    \end{align*} where $\hat\nu =  \alpha \hat\nu_1 + (1-\alpha) \sum_{=1}^r\omega_i P_{p}(\,\cdot\,  | q_i).$ The first line used \cref{eq:weak:continuity:intermediate} and the second split up the  mixture. In the third line, the first integral used the computations in the proof of \cref{theorem:weak:convergence:nice:case}, while the second integral used the definition of $E$, which let us rewrite each $\frac{g_f(q_i)}{\tilde\rho_{p,\cD}(q_i)}$ as in the proof of \cref{theorem:weak:convergence:point:mass}. The final step then consolidated the mixture measure. Hence $\hat P_{\Lambda}^i\stackrel{w}{\to}\hat\nu$.
    \end{proof}

\section{Surface Integral Representations}

Establishing weak convergence in \cref{theorem:weak:convergence:nice:case,theorem:weak:convergence:point:mass} requires writing the disintegration in terms of surface integral. The first lemma below write an integral on $\Lambda$ as a double integral involving a parametrized surface integral on the inside. The second result equates the inner conditional probability integral from the disintegration formula with this surface integral almost everywhere. The first lemma is a simple application of the well-known co-area formula (see, for instance, \cite{nicolaescu2011coarea}). The second applies this co-area type result and argues the inner integrals in both the abstract and surface area versions of the disintegration are two measurable functions that agree on every Borel set, and thus must coincide almost everywhere. Both lemmas impose Assumptions \ref{assumption:compact}, \ref{assumption:Q:smooth}, \ref{assumption:Q:rank}, and  \ref{assumption:prior}.

\begin{lemma} \label{lemma:surface:integral} Let $h$ be an integrable function on $\Lambda$. Then \[\int_{\Lambda} h(\lambda) d\mu_{\Lambda}(\lambda) = \int_{\cD}\int_{Q^{-1}(q)} \frac{h}{\sqrt{\det J_Q J_Q^T}} ds \;d\mu_{\cD}(q),\] where the inner integral is a parametrized surface integral over $Q^{-1}(q)$. If $P_p$ is a prior on $\Lambda$ with density $\rho_p$ in $\mu_{\Lambda}$, then \[\int_{\Lambda} h(\lambda)  dP_p(\lambda) = \int_{\cD}\frac{1}{\tilde\rho_{p,\cD}(q)}\int_{Q^{-1}(q)} \frac{h\cdot\rho_p}{\sqrt{\det J_Q J_Q^T}} ds \;d(QP_p)(q),\]
\end{lemma}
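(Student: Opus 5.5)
The first identity is the co-area formula applied to $Q$. Assumption \ref{assumption:Q:smooth} makes $Q$ locally Lipschitz near $\Lambda$, which is compact by Assumption \ref{assumption:compact}. For $n\ge m$ the co-area formula gives, for integrable $g$ on $\Lambda$,
\[\int_{\Lambda} g\,\sqrt{\det J_QJ_Q^T}\,d\mu_{\Lambda}=\int_{\cD}\int_{Q^{-1}(q)\cap\Lambda} g\,ds\,d\mu_{\cD}(q),\]
where $ds$ is the $(n-m)$-dimensional Hausdorff measure on the fiber. I would take $g=h/\sqrt{\det J_QJ_Q^T}$. This is only legitimate off the degenerate set $Z=\{\det J_QJ_Q^T=0\}$. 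By Assumption \ref{assumption:Q:rank}, $Z$ is contained in finitely many manifolds of dimension at most $n-1$, so $\mu_{\Lambda}(Z)=0$. The left side is therefore unchanged by restricting to $\Lambda\setminus Z$. On the right side, applying the co-area formula to $\chi_Z$ shows that $Z\cap Q^{-1}(q)$ is $ds$-null for $\mu_{\cD}$-a.e.\ $q$. I would first prove the identity for $h\ge 0$ using monotone convergence with truncations $g_j=\min(h,j)\chi_{\{\det J_QJ_Q^T>1/j\}}/\sqrt{\det J_QJ_Q^T}$. I would then split a general integrable $h$ as $h^+-h^-$. On the regular set, the Hausdorff measure agrees with the surface measure of any piecewise smooth parametrization of $Q^{-1}(q)$, so this gives the stated parametrized surface integral. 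The boundary $\partial\Lambda$ is null by Assumption \ref{assumption:compact}, so intersecting the fibers with $\Lambda$ causes no issue.

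For the second identity, I would apply the first with $h$ replaced by $h\rho_p$. This is integrable whenever $h\in L^1(P_p)$. The result is
\[\int_{\Lambda}h\,dP_p=\int_{\cD} G(q)\,d\mu_{\cD}(q),\qquad G(q):=\int_{Q^{-1}(q)}\frac{h\rho_p}{\sqrt{\det J_QJ_Q^T}}\,ds.\]
Taking $h\equiv 1$ in the first identity applied to indicator-weighted $\rho_p$ shows that $\tilde\rho_{p,\cD}=d(QP_p)/d\mu_{\cD}$, which is consistent with Table \ref{table:1}.

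The remaining step is to write $\int G\,d\mu_{\cD}=\int (G/\tilde\rho_{p,\cD})\,d(QP_p)$. This requires $G=0$ $\mu_{\cD}$-a.e.\ on $\{\tilde\rho_{p,\cD}=0\}$. I expect this to be the one delicate point. It holds because $|G|\le \int_{Q^{-1}(q)} |h|\rho_p/\sqrt{\det J_QJ_Q^T}\,ds$, and that integral vanishes wherever $\rho_p=0$ $ds$-a.e.\ on the fiber, which is exactly the case when $\tilde\rho_{p,\cD}(q)=0$. Note that $|h|$ may be unbounded, but the inclusion of null sets still goes through: if $\rho_p=0$ $ds$-a.e.\ then $|h|\rho_p=0$ $ds$-a.e. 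Alternatively, one can cite that $\mu_{\cD}(\{\tilde\rho_{p,\cD}=0\})=0$ from the proof of \cite[Theorem 3.5]{esip_2025} (which uses Assumption \ref{assumption:prior}), so that set can be discarded outright. On $\{\tilde\rho_{p,\cD}>0\}$, multiplying and dividing by $\tilde\rho_{p,\cD}$ and using $d(QP_p)=\tilde\rho_{p,\cD}\,d\mu_{\cD}$ yields the claim.

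The main obstacle is making the co-area step rigorous under only a.e.\ differentiability of $Q$. I would handle it by removing the null set where $Q$ fails to be $C^1$ together with the rank-deficient manifolds, and then applying the formula on the open regular set.
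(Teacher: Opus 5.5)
Your proposal is correct and matches the paper's proof. In both, the first identity is the co-area formula, and the second follows by applying it to $h\rho_p$ and converting $d\mu_{\cD}$ to $d(QP_p)$ via $\tilde\rho_{p,\cD}=d(QP_p)/d\mu_{\cD}$. You also handle the rank-deficient set, the set $\{\tilde\rho_{p,\cD}=0\}$, and the integrability condition $h\rho_p\in L^1(\mu_{\Lambda})$, which the paper's two-line proof leaves implicit. One correction: a.e.\ continuous differentiability does not by itself make $Q$ locally Lipschitz, so your opening claim is wrong as stated; it is your closing restriction to the open regular set that makes the co-area step legitimate.
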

    \begin{proof}
    The first claim follows from the co-area formula \cite{nicolaescu2011coarea}. For the second, observe that \begin{align*}
        \int_{\Lambda} h(\lambda) d P_p(\lambda) &= \int_{\Lambda} h(\lambda) \rho_p(\lambda) d \mu_{\Lambda}(\lambda) = \int_{\cD}\int_{Q^{-1}(q)} \frac{h\cdot\rho_p}{\sqrt{\det J_Q J_Q^T}} ds \;d\mu_{\cD}(q) \\
        &= \int_{\cD}\frac{1}{\tilde\rho_{p,\cD}(q)}\int_{Q^{-1}(q)} \frac{h\cdot\rho_p}{\sqrt{\det J_Q J_Q^T}} ds \;d(QP_p)(q),
    \end{align*} where we used $\frac{dP_p}{d\mu_{\Lambda}}=\rho_p$ and then $\frac{d(QP_p)}{d\mu_{\cD}}=\tilde\rho_{p,\cD}$ (established in the proof of \cite[Theorem 3.5]{esip_2025})  on the right-hand side. 
    \end{proof}

\begin{lemma} \label{lemma:inner:integral:equality} Let $h$ be an integrable function on $\Lambda$. Then for $q$ chosen $Q\mu_{\Lambda}$-a.e., \begin{align*}
    \int_{Q^{-1}(q)} h(\lambda) d \mu_{N}(\lambda| q) =  \frac{1}{\tilde\rho_{\cD}(q)}\int_{Q^{-1}(q)} \frac{h}{\sqrt{\det J_QJ_Q^T}}ds,
\end{align*} where $\tilde\rho_{\cD}(q) = \int_{Q^{-1}(q)} \frac{1}{\sqrt{\det J_QJ_Q^T}}ds$. Next let $P_p$ be any prior with density $\rho_p$ with respect to $\mu_{\Lambda}$. For $q$ chosen $QP_p$-a.e., \begin{align*}
    \int_{Q^{-1}(q)} h(\lambda) d P_p(\lambda| q) = \frac{1}{\tilde\rho_{p,\cD}(q)}\cdot \int_{Q^{-1}(q)} \frac{h \cdot \rho_p}{\sqrt{\det J_QJ_Q^T}}ds.
\end{align*}
\end{lemma}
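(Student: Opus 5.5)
The plan is to show that the two inner integrals, viewed as functions of $q$, integrate identically over every Borel set $D\subseteq\cD$ against the relevant base measure, and then to invoke uniqueness of Radon--Nikodym derivatives. The comparison will be made against $\mu_{\cD}$, using that $Q\mu_{\Lambda}\ll\mu_{\cD}$ with density $\tilde\rho_{\cD}$ and that $QP_p\ll\mu_{\cD}$ with density $\tilde\rho_{p,\cD}$. First assume $h\ge 0$; the general integrable case then follows by splitting $h=h^+-h^-$ and noting both sides are linear in $h$.

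For the first claim, fix $D\in\cB_{\cD}$ and apply the disintegration of $\mu_{\Lambda}$ by $Q$ (in the form \cref{eq:distintegration}, with $\mu_N(\,\cdot\,| q)$ and base measure $Q\mu_{\Lambda}$) to the function $h\chi_{Q^{-1}(D)}$. Since $\chi_{Q^{-1}(D)}(\lambda)=\chi_D(q)$ on $Q^{-1}(q)$, this gives
\[\int_{Q^{-1}(D)} h\,d\mu_{\Lambda}=\int_D \Big[\int_{Q^{-1}(q)} h\,d\mu_N(\lambda| q)\Big]\tilde\rho_{\cD}(q)\,d\mu_{\cD}(q).\]
Applying \cref{lemma:surface:integral} to the same function $h\chi_{Q^{-1}(D)}$ instead gives
\[\int_{Q^{-1}(D)} h\,d\mu_{\Lambda}=\int_D\int_{Q^{-1}(q)}\frac{h}{\sqrt{\det J_QJ_Q^T}}\,ds\,d\mu_{\cD}(q).\]
Both integrands are nonnegative and measurable in $q$. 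Measurability of the surface integral comes from the co-area framework, and measurability of the conditional integral from regularity of the disintegration. Their integrals agree on every $D$, so the integrands coincide $\mu_{\cD}$-a.e.: $\tilde\rho_{\cD}(q)\int h\,d\mu_N(\cdot| q)=\int h/\sqrt{\det J_QJ_Q^T}\,ds$. Dividing by $\tilde\rho_{\cD}$ is legitimate off the set $\{\tilde\rho_{\cD}=0\}$, which is $\mu_{\cD}$-null, and the null set where equality may fail is also $Q\mu_{\Lambda}$-null because $Q\mu_{\Lambda}\ll\mu_{\cD}$. This gives the statement $Q\mu_{\Lambda}$-a.e.

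The second claim follows the same template. Disintegrate $P_p$ by $Q$ (disintegration theorem, base measure $QP_p$ with density $\tilde\rho_{p,\cD}$) and apply the second identity of \cref{lemma:surface:integral}, each to $h\chi_{Q^{-1}(D)}$. This yields, for every $D$,
\[\int_D\Big[\int h\,dP_p(\cdot| q)\Big]d(QP_p)(q)=\int_D\frac{1}{\tilde\rho_{p,\cD}(q)}\int_{Q^{-1}(q)}\frac{h\rho_p}{\sqrt{\det J_QJ_Q^T}}\,ds\,d(QP_p)(q).\]
Hence the two integrands agree $QP_p$-a.e. Here $\tilde\rho_{p,\cD}>0$ holds $QP_p$-a.e., since $QP_p(\{\tilde\rho_{p,\cD}=0\})=\int_{\{\tilde\rho_{p,\cD}=0\}}\tilde\rho_{p,\cD}\,d\mu_{\cD}=0$, so the ratio is well defined where it is needed.

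The main obstacle is justifying the uniqueness step. This requires the inner integrals to be measurable in $q$ and, for $h$ merely integrable, finite a.e.; both follow from Tonelli applied to $|h|$ in each representation. A second point is that the conditional integral $\int h\,d\mu_N(\cdot| q)$ is well defined even though $\mu_N(\cdot| q)$ lives on a Lebesgue-null contour where $h$ is only defined up to $\mu_{\Lambda}$-null sets. I would handle this by fixing a Borel representative of $h$ and using that both sides are unchanged, $q$-a.e., when $h$ is altered on a $\mu_{\Lambda}$-null set, which is again a consequence of the two integrated identities.
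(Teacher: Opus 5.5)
Your proposal is correct and follows the paper's argument: you integrate both candidate inner integrals over an arbitrary Borel set $D\subseteq\cD$, evaluate one side with the disintegration theorem and the other with \cref{lemma:surface:integral}, and conclude a.e.\ equality by uniqueness of densities. The differences are cosmetic. In the first claim you compare against $\mu_{\cD}$ after multiplying by $\tilde\rho_{\cD}$, where the paper compares directly against $Q\mu_{\Lambda}$, and you add the reduction to $h\ge 0$ together with the measurability, finiteness, and choice-of-representative bookkeeping that the paper leaves implicit.
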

    \begin{proof} 
    Set $g_1(q)= \int_{Q^{-1}(q)} h(\lambda) d \mu_{N}(\lambda| q)$ and $g_2(q)= \frac{1}{\tilde\rho_{\cD}(q)}\int_{Q^{-1}(q^{\ast})} \frac{h}{\sqrt{\det J_QJ_Q^T}}ds$. Pick any $B\in\cB_{\cD}$. Then  applying the disintegration theorem, \begin{align*}
        \int_B g_1(q) d (Q\mu_{\Lambda})(q) &= \int_{\cD} \chi_B(q)\left[\int_{Q^{-1}(q)} h(\lambda) d\mu_N(\lambda| q) \right]d(Q\mu_{\Lambda})(q) \\
        &= \int_{\cD} \left[\int_{Q^{-1}(q)}  \chi_B(Q(\lambda))\cdot h(\lambda) d\mu_N(\lambda| q) \right]d(Q\mu_{\Lambda})(q) \\
        &= \int_{\Lambda} \chi_B(Q(\lambda)) h(\lambda) d\mu_{\Lambda}(\lambda).
    \end{align*} Next, applying that $\tilde\rho_{\cD} = \frac{d (Q\mu_{\Lambda})}{d\mu_{\cD}}$ along with the first claim in \cref{lemma:surface:integral}, \begin{align*}
        \int_B g_2(q) d (Q\mu_{\Lambda})(q) &= \int_{\cD} \frac{\chi_B(q)}{\tilde\rho_{\cD}(q)}\left[\int_{Q^{-1}(q)} \frac{h}{\sqrt{\det J_QJ_Q^T}}ds\right]d(Q\mu_{\Lambda})(q) \\
        &= \int_{\cD} \frac{1}{\tilde\rho_{\cD}(q)}\left[\int_{Q^{-1}(q)}  \frac{(\chi_B\circ Q)\cdot  h}{\sqrt{\det J_QJ_Q^T}}ds \right]d(Q\mu_{\Lambda})(q) \\
        &= \int_{\cD} \left[\int_{Q^{-1}(q)}  \frac{(\chi_B\circ Q)\cdot  h}{\sqrt{\det J_QJ_Q^T}}ds \right]d\mu_{\cD}(q) \\
        &= \int_{\Lambda} \chi_B(Q(\lambda)) h(\lambda)d\mu_{\Lambda}(\lambda).
    \end{align*} This proves $g_1(q)=g_2(q)$ for $q$ chosen $Q\mu_{\Lambda}$-a.e.
     
    The other claim follows a similar logic. Set $g_1(q)=\int_{Q^{-1}(q)} h(\lambda) d P_p(\lambda| q) $ and $g_2(q)=\frac{1}{\tilde\rho_{p,\cD}(q)}\cdot \int_{Q^{-1}(q^{\ast})} \frac{h \rho_p}{\sqrt{\det J_QJ_Q^T}}ds$. Pick any $B\in\cB_{\cD}$. Then observe that by disintegration \begin{align*}
            \int_B g_1(q) d (QP_p)(q) &= \int_{\cD} \chi_B(q)  \left[\int_{Q^{-1}(q)} h(\lambda) d P_p(\lambda| q) \right]\;d(QP_p)(q) \\
            &= \int_{\cD}  \left[ \int_{Q^{-1}(q)}\chi_B(Q(\lambda)) h(\lambda) d P_p(\lambda| q)\right]  \; d(QP_p)(q)\\
            &= \int_{\Lambda} \chi_B(Q(\lambda)) h(\lambda) d P_p(\lambda).
        \end{align*} 

        By similar logic but using the second result in \cref{lemma:surface:integral}, we have \begin{align*}
            \int_B g_2(q) d (QP_p)(q) &= \int_{\cD} \chi_B(q)  \left[\frac{1}{\tilde\rho_{p,\cD}(q)}\cdot \int_{Q^{-1}(q^{\ast})} \frac{h \cdot \rho_p}{\sqrt{\det J_QJ_Q^T}}ds \right]\;d(QP_p)(q) \\
            &= \int_{\cD}  \left[\frac{1}{\tilde\rho_{p,\cD}(q)}\cdot \int_{Q^{-1}(q^{\ast})} \frac{(\chi_B\circ Q)\cdot h \cdot\rho_p}{\sqrt{\det J_QJ_Q^T}}ds \right]\;d(QP_p)(q)\\
            &= \int_{\Lambda} \chi_B(Q(\lambda)) h(\lambda) d P_p(\lambda).
        \end{align*} 

        Thus, for all $B\in\cB_{\cD}$, $\int_{B}g_1(q) d (QP_p)(q)=\int_B g_2(q) d(QP_p)(q)$. Therefore, $g_1(q)=g_2(q)$ for $q$ chosen  $QP_p$-a.e.
    \end{proof}

\bibliographystyle{siamplain}
\bibliography{ref}

\end{document}